\theoremstyle{plain}
\newtheorem{theorem}{Theorem}
\newtheorem{proposition}{Proposition}
\theoremstyle{definition}
\newtheorem{definition}{Definition}
\newtheorem{example}{Example}
\theoremstyle{definition}
\newtheorem{remark}{Remark}
\journal{MSCS}
\begin{document}
\begin{frontmatter}
\title{ \Large{Learning Quantum Finite Automata with Queries}}

\author{Daowen Qiu\corref{one}}

\cortext[one]{ issqdw@mail.sysu.edu.cn (D. Qiu)}

\address{Institute of Quantum Computing and Computer Theory,
School of
Computer Science and Engineering,\\ Sun Yat-sen University, Guangzhou 510006, China}

\begin{abstract}
{\it Learning finite automata} (termed as {\it model learning}) has become an important field in machine learning and has been useful realistic applications.
Quantum finite automata (QFA) are simple models of quantum computers with finite memory. Due to their simplicity, QFA have well physical realizability, but one-way
QFA still have essential advantages over classical finite automata with regard to state complexity (two-way QFA are more powerful than classical finite automata in computation ability as well). As a different problem in {\it quantum learning theory} and {\it quantum machine learning},  in this paper, our purpose is to initiate the study of {\it learning QFA with queries} (naturally it may be termed as {\it quantum model learning}), and the main results are  regarding learning two basic one-way QFA (1QFA): (1) We propose a  learning algorithm for measure-once 1QFA (MO-1QFA) with query complexity of polynomial time; (2) We propose a  learning algorithm for measure-many 1QFA (MM-1QFA) with query complexity of polynomial-time, as well.  



\end{abstract}

\begin{keyword}
Quantum computing\sep Quantum finite automata  \sep  Learning from queries \sep Quantum model learning \sep SD oracles

\end{keyword}
\end{frontmatter}

\section{Introduction}

{\it Learning finite automata} has become an important field in machine learning \cite {KV94} and has been applied to wide-ranging realistic problems \cite{Hig05, Vaa17},  for example, smartcards, network protocols, legacy software,   robotics and control systems, pattern recognition, computational linguistics, computational biology, data compression, data mining, etc. In \cite{Vaa17} learning finite automata is termed as {\it model learning}. In fact, model learning and model checking as well as and model-based testing have intrinsic connections (see the pioneering contribution of Peled et al. \cite{Pel02} and Steffen et al. \cite{Hig10}).


Learning finite automata was first considered by Moore \cite {Moore56} and an exponential-time query algorithm was proposed.
In particular,  Angluin \cite{Ang87} in 1987 proposed the so-called membership and equivalence queries,  a ground-breaking method for learning the models of finite automata. For learning {\it deterministic finite automata} (DFA),  according to Angluin's algorithm,  the learner initially only knows the inputs (i.e. alphabet)  of the model to be learned (say ${\cal M}$), and the aim of the learner
is to learn the model by means of two types of queries, that is, membership queries (MQ) and equivalence queries (EQ). MQ means that  the learner asks what the result  (accepting or rejecting) of output is
in response to an input sequence, and the oracle answers with accepting or rejecting, while EQ signifies the learner  whether a hypothesized machine model (say ${\cal H}$) is the same as the learned machine, and the oracle answers yes if this is the case. Otherwise `no" is replied  and  an input string is provided as a counterexample to distinguish ${\cal H}$ and ${\cal M}$.

 The complexity of queries of Angluin's algorithm \cite{Ang87} is polynomial for learning  DFA and Mealy machines. Angluin \cite{Ang88}  proved that DFA cannot be learned in polynomial time by  membership queries (or equivalence queries) only. Since Angluin's algorithm was proposed \cite{Ang87}, learning other models of finite automata has been investigated. Tzeng \cite{Tze92} studied learning {\it probabilistic finite automata} (PFA) and Markov chains via {\it SD oracle},  where SD oracle can answer state distribution, i.e., probability distribution of states for each input string, so it is more powerful than  membership queries (MQ).  For learning DFA via SD oracle, a state is replied for each input string, and the query complexity of learning DFA via  SD oracle is polynomial \cite{Tze92}.

 Then Bergadano and  Varricchio \cite{BV96} used membership queries (MQ) and equivalence queries (EQ) to learn appropriately probabilistic finite automata, and  a {\it probably approximately correctly} learning algorithm (i.e. PAC algorithm) was presented. Learning {\it nondeterministic finite automata} (NFA) was studied by
 Bollig et al. \cite{BHK09}. In recent years, Angluin et al. \cite{AEF15} initiated the research of learning alternating automata, and Berndt et al. \cite{BLL17} further solved the learning problem of  residual alternating automata.

A natural inquiry is that SD oracle seems too strong.
 However,  it was showed by Tzeng \cite{Tze92} that SD oracle is actually not too strong for learning DFA and PFA if the query complexity is required to be polynomial, because learning a consistency problem related to DFA and PFA  via  SD oracle is still NP-complete \cite{Tze92}. In this paper, we use an {\it AD oracle}  for learning {\it quantum finite automata} (QFA) in polynomial time, that is, AD oracle can answer a state of superposition for each input string, i.e., amplitude distribution of states. 
 Similarly it follows that using {\it AD oracle}  to learn a consistency problem related to reversible finite automata (RFA) and MO-1QFA   is NP-complete.




{\it Quantum machine learning} was early considered by  Bshouty and Jackson \cite{BJ99}  with learning from quantum examples, and  then {\it quantum learning theory} \cite{AW17}  as an important theoretical subject of quantum machine learning has been deeply developed. Quantum learning theory \cite{AW17}  includes models of quantum exact learning, quantum PAC learning, and quantum agnostic learning;  these models are combinations of corresponding classical learning models with quantum computing (in a way, quantum query algorithms). We further review quantum learning theory and quantum machine learning more specifically.

We first recall quantum learning theory, which studies the theoretical aspects of quantum machine learning. 
As pointed out above, in 1999, Bshouty and Jackson \cite{BJ99}  showed that all probably approximately correct (PAC)-learnable function classes are learnable in the quantum models, and notably,
in 2004, Servedio and Gortler \cite{SG04} studied quantum versions of  Angluin’s model of exact learning from membership queries and Valiant’s PAC model of learning from random examples.  Then,
in 2007,  Aaronson \cite{Aa07} investigated learning quantum states, and in 2010,  Zhang \cite{Zha10} further investigated the quantum PAC learning model.  
In 2012,  Gavinsky \cite{Ga12} initialed a new quantum learning model called Predictive Quantum (PQ), which is the quantum analogue of PAC, and afterwards, in 2015,  Belovs \cite{Be15} investigated  the junta learning problem by designing quantum algorithms. 
Quantum deep learning was studied by  Wiebe et al. \cite{WKS16} in 2016, and 
Cheng et al. \cite{CHY16}   provided a  framework to analyze learning matrices in the Schatten class. A detailed survey  concerning quantum learning theory was  presented by
Arunachalam and de Wolf \cite{AW17} in 2017,  and they \cite{AW18} further showed that classical and quantum sample complexity are equal up to constant factors for every concept class.

Now we  simply recall the development of quantum machine learning (QML).
In 2009, Harrow et al. \cite{HH09} proposed a quantum algorithm for solving systems of linear equations, which may be thought of the start of studying quantum machine learning. Then, in 2012, Wiebe et al. \cite{WD12} proposed a quantum linear regression algorithm by virtue of  HHL algorithm. In 2014, Lloyd et al. \cite{LR14}  proposed a quantum version of PCA (principal component analysis) dimension reduction algorithm. Also, quantum matrix inversion was employed in a supervised discriminative learning algorithm \cite{Rebentrost 2014}.  In 2015, Schuld et al. \cite{Schuld  2015} presented a comprehensive perspective on quantum machine learning, and in 2016, Cong et al. \cite{CD16} proposed a quantum data dimension reduction algorithm. In 2017, Biamonte et al.  \cite{Biamonte  2017} focused on utilizing a quantum computer to analyze classical or quantum data encoded as quantum states, and Kerenidis and Prakash\cite{KP17} proposed a quantum algorithm for recommendation systems.  In 2018, Lloyd et al. \cite{LW18} changed a classical generative adversarial network to obtain a quantum generative adversarial network, and Mitarai et al. \cite{MM18} constructed a quantum neural network model.  In 2019, Zhao et al. \cite{Zh19} designed quantum Bayesian neural networks, and Benedetti et al. \cite{Benedetti 2019} provided an overview of these models' components and investigated their application.  Recently, in 2021, an application of machine learning techniques to quantum devices was found by Marquardt et al. \cite{Marquardt 2021}.   In 2022, Huang et al. \cite{HB22} confirmed that QML can more effectively learn the operating rules of the physical world than any classical machine learning method, and then, prediction of the evolution of quantum systems has been achieved successfully in \cite{Rodriguez 2022}. More recently, in 2023, Meyer et al. \cite{Meyer 2023} explored how symmetries in learning problems can be exploited to create quantum learning models, and Krenn et al. \cite{Krenn 2023} discussed the application of machine learning and artificial intelligence in analyzing quantum measurements.

However, {\it learning quantum finite automata (QFA)}  is still a pending problem to be studied, and this is the main goal of this paper. QFA can be thought of as a theoretical model of quantum computers in which the memory is finite and described by a finite-dimensional state space \cite{AY15, QLMG12,Gruska99,BK19}. An excellent and comprehensive survey on QFA was presented by  Ambainis and Yakaryilmaz \cite{AY15}. Moreover,  QFA have been studied significantly in physical experiment  \cite{MP20,  PHYF2022, TFLZZ2019}.

 One-way QFA (1QFA) were firstly proposed and studied by Moore and Crutchfield \cite{MC00}, Kondacs and Watrous \cite{KW97}, and then Ambainis and Freivalds \cite{AF98}, Brodsky and Pippenger \cite{BP02}, and other authors (e.g., the references in \cite{AY15, QLMG12,BK19}), where ``1" means ``one-way", that is, the tape-head moves only from left side to right side.  The decision problems regarding the equivalence of 1QFA and the minimization of states of 1QFA have been studied in  \cite{QLMG12, MQL12, QLZMG11, LQ06, LQ08}. 


More specifically, \textit{measure-once} one-way QFA (MO-1QFA) were initiated by Moore and Crutchfield \cite{MC00} and  \textit{measure-many } one-way QFA (MM-1QFA) were studied first by Kondacs and Watrous \cite{KW97}. In MO-1QFA, there is only a measurement for computing each input string, performing after reading the last symbol; in contrast, in MM-1QFA, measurement is performed after reading each symbol, instead of only the last symbol. 
Then other 1QFA   were also proposed and studied by Ambainis {\it et al.},  Nayak,  Hirvensalo,  Yakaryilmaz and Say, Paschen, Ciamarra,  Bertoni {\it et al.}, Qiu and Mateus {\it et al.} as well other authors (e.g., the references in \cite{AY15}), including:
Latvian QFA  \cite{ABG06},  QFA with control language \cite{BMP03},    1QFA with ancilla qubits (1QFA-A) \cite{Pas00},   one-way quantum finite automata together with classical states (1QFAC) \cite{QLMS15}, and other 1QFA such as  Nayak-1QFA (Na-1QFA), General-1QFA (G-1QFA), and fully 1QFA (Ci-1QFA), where G-1QFA, 1QFA-A, Ci-1QFA, 1QFA-CL, and 1QFAC can recognize all regular languages with bounded error.  For more details, we can refer to
 \cite{AY15, QLMG12, BK19}.



MO-1QFA have advantages over crisp finite automata in state complexity for recognizing some languages \cite{QLMG12, BK19}. Mereghetti and  Palano et al. \cite{MP20} realized an MO-1QFA with optic implementation and the state complexity of this MO-1QFA has exponential advantages over DFA and NFA as well as PFA \cite{Paz71}. 
MM-1QFA  have stronger computing power than MO-1QFA \cite{BP02}, but both MO-1QFA and MM-1QFA accept with bounded error only proper subsets  of regular languages. Indeed, Brodsky and Pippenger \cite{BP02} proved that the languages accepted by MO-1QFA with bounded error are exactly {\it reversible languages} that are accepted by {\it reversible finite automata} (RFA). RFA have three different definitions  and  were named as group automata, BM-reversible automata, and AF-reversible automata (see \cite{Qiu07}), respectively. In particular, these three definitions were proved to be equivalent in \cite{Qiu07}.



The remainder of the paper is organized as follows. In Section 2,  in the interest of readability, we first introduce basics in quantum computing, then one-way QFA are recalled and we focus on reviewing MO-1QFA and MM-1QFA.   The main contributions are in Sections 3 and 4. In Section 3, we first show the appropriate oracle to be used, that is, $\boldsymbol{AA}$ is not strong enough for learning RFA and MO-1QFA with polynomial time, and a more powerful oracle (named as $\boldsymbol{AD}$ oracle) is thus employed. With $\boldsymbol{AD}$ oracle we design an algorithm for learning MO-1QFA with polynomial time, and the correctness and complexity of algorithm are proved and analyzed in detail. Afterwards, in Section 4 we continue to design an algorithm for learning MM-1QFA with polynomial time.  Finally, the main results are summarized in Section 5, and further problems are mentioned for studying.

\section{ Preliminaries on  quantum computing and QFA}

 For the sake of readability, in this section we outline basic notations and principles in quantum computing and review the definitions of MO-1QFA, MM-1QFA,  and RFA. For more details, we can refer to \cite{NC00} and \cite{QLMG12,SY14,AY15,BK19}.

\subsection{ Basics in quantum computing}
Let $\mathbb{C}$ denote the set of all complex numbers, $\mathbb{R}$
 the set of all real numbers, and $\mathbb{C}^{n\times m}$
 the set of $n\times m$ matrices having entries in $\mathbb{C}$. Given two matrices $A\in \mathbb{C}^{n\times m}$ and $B\in\mathbb{C}^{p\times q}$, their {\it tensor product} is the $np\times
mq$ matrix, defined as \[A\otimes B=\begin{bmatrix}
  A_{11}B & \dots & A_{1m}B \\
  \vdots & \ddots & \vdots \\
  A_{n1}B &\dots & A_{nm}B \
\end{bmatrix}.\]
$(A\otimes B)(C\otimes D)=AC\otimes BD$ holds if the multiplication of matrices is satisfied.

 If $MM^\dagger=M^\dagger M=I$, then matrix $M\in\mathbb{C}^{n\times n}$ is  {\it
unitary},  where
 $\dagger$ denotes conjugate-transpose operation. $M$ is said to be {\it
Hermitian} if $M=M^\dagger$. For $n$-dimensional row vector
$x=(x_1,\dots, x_n)$, its norm  $||x||$ is defined as
$||x||=\big(\sum_{i=1}^n x_ix_i^{*}\big)^{\frac{1}{2}}$, where
symbol $*$ denotes conjugate operation.  Unitary operations preserve
the norm, i.e., $||xM||=||x||$ for each  $x\in \mathbb{C}^{1\times
n}$ and any unitary matrix $M\in\mathbb{C}^{n\times n}$.


According to the basic principles of quantum mechanics \cite{NC00}, a state of quantum system can be described by a unit vector in a Hilbert space. More specifically,
 let $B=\{q_1,\dots,
q_n\}$ associated with a quantum system denote a  basic state set, where every basic state $q_i$ can be represented by an
$n$-dimensional row vector $\langle q_i|=(0\dots1\dots0)$ having
only 1 at the $i$th entry (where $\langle \cdot|$ is
Dirac notation, i.e., bra-ket notation). At any time, the state of this system
is a {\it superposition} of these basic states and can be
represented by a row vector $\langle \phi|=\sum_{i=1}^nc_i\langle
q_i|$ with $c_i\in\mathbb{C}$ and $\sum_{i=1}^n|c_i|^2=1$; $|\phi\rangle$ represents the  conjugate-transpose of $\langle \phi|$. So, the quantum system is described by Hilbert space ${\cal H}_Q$ spanned by the base $\{|q_i\rangle: i=1,2,\dots,n\}$, i.e. ${\cal H}_Q=span\{| q_i\rangle: i=1,2,\dots,n\}$.

The state evolution of quantum system complies with unitarity.    Suppose the current state of system is $|\phi\rangle$. If it is acted on by some unitary matrix (or unitary operator) $M_1$,  then $|\phi\rangle$ is changed to the new current state $M_1|\phi\rangle$;  if the second unitary matrix, say $M_2$, is acted on  $M_1|\phi\rangle$, then $M_1|\phi\rangle$ is  changed to  $M_2 M_1|\phi\rangle$.
So, after  a series of unitary matrices $M_1, M_2, \ldots, M_k$ are performed in sequence, the system's state becomes $M_kM_{k-1}\cdots M_1|\phi\rangle$.

To get some information from the quantum system,  we need to make a measurement on its current state. Here we consider {\it projective measurement} (i.e. von Neumann measurement). A projective measurement is described by an {\it observable}  that is  a Hermitian matrix ${\cal O}=c_1P_1+\dots +c_s P_s$, where $c_i$ is its eigenvalue and, $P_i$ is the projector onto the eigenspace corresponding to $c_i$. In this case, the projective measurement of ${\cal O}$ has result set $\{c_i\}$ and projector set $\{P_i\}$. For example, given state $|\psi\rangle$ is made by the measurement ${\cal O}$, then the probability of obtaining result $c_i$ is $\|P_i|\psi\rangle\|^2$ and the state $|\psi\rangle$ collapses to $\frac{P_i|\psi\rangle}{\|P_i|\psi\rangle\|}$.

\subsection{Review of  one-way QFA and RFA}

For non-empty set $\Sigma$, by $\Sigma^{*}$ we mean the set of all finite length
strings over $\Sigma$, and $\Sigma^n$ denotes the set of all
strings over $\Sigma$ with length $n$. For $u\in \Sigma^{*}$,
$|u|$ is the length of $u$; for example, if
$u=x_{1}x_{2}\ldots x_{m}\in \Sigma^{*}$ where $x_{i}\in \Sigma$,
then $|u|=m$.  For set $S$,  $|S|$ denotes the cardinality of $S$.


\subsubsection{MO-1QFA}
We recall the definition of MO-1QFA. An MO-1QFA with $n$ states and input alphabet $\Sigma$ is a five-tuple \begin{equation}{\cal M}=(Q, |\psi_0\rangle,
\{U(\sigma)\}_{\sigma\in \Sigma}, Q_a,Q_r) \end{equation} where
\begin{itemize}
\item  $Q=\{|q_1\rangle,\dots,|q_n\rangle\}$ consist of  an orthonormal base that
spans a Hilbert space ${\cal H}_Q$ ($|q_i\rangle$ is identified with a column vector with  the $i$th entry 1 and the others 0);
at any time,
the state of ${\cal M}$ is a superposition of these basic states;

 \item $|\psi_0\rangle\in {\cal H}$ is the initial state;

\item for any $\sigma\in \Sigma$, $U(\sigma)\in \mathbb{C}^{n\times
n}$ is a unitary matrix;

\item $Q_a, Q_r\subseteq Q$ with $Q_a\cup Q_r=Q$ and $Q_a\cap Q_r=\emptyset$ are the accepting and rejecting states, respectively, and it describes an observable  by using the projectors
$P(a)=\sum_{|q_i\rangle\in Q_a}|q_i\rangle\langle q_i|$ and $P(r)=\sum_{|q_i\rangle\in Q_r}|q_i\rangle\langle q_i|$, with the result set $\{a,r \}$ of which `$a$'
and `$r$' denote ``accepting'' and ``rejecting'', respectively. Here $Q$ consists of accepting and rejecting sets.
\end{itemize}

Given an MO-1QFA ${\cal M}$ and an input word $s=x_1\dots
x_n\in\Sigma^{*}$, then starting from $|\psi_0\rangle$, $U(x_1),\dots,
U(x_n)$ are applied in succession, and at the end of the word, a
measurement $\{P(a),P(r)\}$ is performed with the result that ${\cal
M}$ collapses into accepting states or rejecting states with
corresponding probability. Hence, the probability $L_{\cal M}(x_1\dots x_n)$ of ${\cal M}$ accepting $w$ is defined as:
\begin{align}
L_{\cal M}(x_1\dots x_n)=\|P(a)U_s|\psi_0\rangle\|^2\label{f_MO}
\end{align}
where we denote $U_s=U_{x_n}U_{x_{n-1}}\cdots U_{x_1}$.


\subsubsection{RFA}

Now we recollect reversible finite automata (RFA). As mentioned above, there are three equivalent definitions for RFA \cite{Qiu07}, that is  group automata,  BM-reversible automata, and AF-reversible automata. Here we describe group automata. First we review DFA. A DFA  $G=(S, s_0, \Sigma, \delta, S_a)$, where $S$ is a finite state set, $s_0\in S$ is its  initial state, $S_a\subseteq  S$ is its accepting state set, $\Sigma$ is and input alphabet,  and $\delta$ is a transformation function, i.e., a mapping $\delta: S\times \Sigma\rightarrow S$.

 An RFA (group automaton) $G=(S, s_0, \Sigma, \delta, S_a)$ is DFA  and satisfies that for any $q\in S$ and any $\sigma\in\Sigma$, there is unique $p\in S$ such that $\delta(p,\sigma)=q$.

 The languages accepted by MO-1QFA with bounded error is exactly the languages accepted by RFA \cite{BP02}.  In fact, RFA are the special cases of MO-1QFA, and this is showed by the following proposition.

\begin{proposition}

(1) For any MO-1QFA ${\cal M}=(Q, |\psi_0\rangle,
\{U(\sigma)\}_{\sigma\in \Sigma}, Q_a,Q_r) $ with $|\psi_0\rangle\in Q$, if all entries in $U(\sigma)$ for each $\sigma\in \Sigma$ are either $0$ or $1$, then ${\cal M}$ is actually a group automaton.  (2) If $G=(S, s_0, \Sigma, \delta, S_a)$  is  a group automaton, then $G$ is actually an  MO-1QFA.
\end{proposition}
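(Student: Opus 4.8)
The plan is to reduce the statement to a single linear-algebra fact: a unitary matrix all of whose entries lie in $\{0,1\}$ must be a permutation matrix. Once this is in hand, the hypotheses force the MO-1QFA to collapse into a deterministic device whose one-step transition maps are bijections, which is exactly the group-automaton form of an RFA.

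First I would prove the key claim. Fix $\sigma\in\Sigma$ and write $U=U(\sigma)$. Since $U$ is unitary, its columns $U|q_i\rangle$ are orthonormal; in particular each column has unit norm. Because every entry of $U$ is $0$ or $1$, the squared norm of a column equals the number of $1$'s occurring in it, so each column contains \emph{exactly} one entry equal to $1$. Orthogonality of two distinct columns then forces their unique $1$'s to lie in distinct rows, since two columns sharing the row of their $1$ would have inner product $1\neq 0$. Hence the positions of the $1$'s determine a bijection of $\{1,\dots,n\}$, i.e. $U$ is a permutation matrix, and there is a permutation $\pi_\sigma$ with $U(\sigma)|q_i\rangle=|q_{\pi_\sigma(i)}\rangle$ for every $i$.

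Next I would assemble the candidate group automaton $G=(S,s_0,\Sigma,\delta,S_a)$. I take $S=\{q_1,\dots,q_n\}$ identified with the basis of ${\cal H}_Q$, let $s_0$ be the unique basis state with $|s_0\rangle=|\psi_0\rangle$ (available because $|\psi_0\rangle\in Q$), set $S_a=\{q_i: |q_i\rangle\in Q_a\}$, and define $\delta(q_i,\sigma)=q_{\pi_\sigma(i)}$. Since $\pi_\sigma$ is a bijection, for every $q\in S$ and every $\sigma\in\Sigma$ there is a unique $p\in S$ with $\delta(p,\sigma)=q$, namely $p=q_{\pi_\sigma^{-1}(k)}$ when $q=q_k$. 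This is precisely the defining property of an RFA in group-automaton form.

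Finally I would verify that $G$ reproduces ${\cal M}$ exactly. Using $U_s=U_{x_n}\cdots U_{x_1}$ and arguing by induction on the length of $s=x_1\cdots x_n$, the vector $U_s|\psi_0\rangle$ remains a single basis state $|q_j\rangle$ at every step, where $q_j=\delta(\cdots\delta(\delta(s_0,x_1),x_2)\cdots,x_n)$ is the state $G$ reaches on $s$. The projectors $P(a),P(r)$ then act on $|q_j\rangle$ deterministically, so $L_{\cal M}(s)=\|P(a)|q_j\rangle\|^2\in\{0,1\}$, equal to $1$ if and only if $q_j\in S_a$, matching the DFA acceptance of $G$. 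I expect the permutation-matrix claim of the second paragraph to be the only substantive step; the bijection property and the deterministic-acceptance induction are then routine bookkeeping.
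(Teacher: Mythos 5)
Your proof is correct and takes essentially the same route as the paper's: both reduce the statement to the fact that a unitary matrix with entries in $\{0,1\}$ must be a permutation matrix, hence a bijection on the basis states, which is exactly the group-automaton property. The paper simply declares this fact ``clear,'' whereas you prove it from column orthonormality and additionally verify the behavioral equivalence by induction, so yours is a fully fleshed-out version of the same argument.
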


\begin{proof} (1)
Suppose the base states $Q=\{|q_i\rangle: i=1,2,\ldots,n\}$, where $|q_i\rangle$ is an $n$-dimensional column vector with  the $i$th entry 1 and the others 0. Let $|\psi_0\rangle=|q_{i_0}\rangle$  for some $i_0\in \{1,2,\ldots,n\}$. It is clear that $U(\sigma)$ (for each $\sigma\in\Sigma$) is a permutation matrix and therefore $U(\sigma)$ is also a bijective mapping from $Q$ to $Q$. So, ${\cal M}$ is  a group automaton.

(2)      If $G=(S, s_0, \Sigma, \delta, S_a)$  is  a group automaton with $|S|=n$, then  we denote $S=\{q_1,q_2,\ldots, q_n\}$ and $s_0$ is some $q_i\in S$. According to the definition of group automata, for each  $\sigma\in\Sigma$, $\delta(\cdot,\sigma)$ is a bijective mapping from $S$ to $S$.  So, we can identify $q_i$ with an $n$-dimensional column vector with  the $i$th entry 1 and the others 0. Then for each  $\sigma\in\Sigma$, $\delta(\cdot,\sigma)$ induces a unitary matrix $U(\sigma)$ acting  on the $n$-dimensional Hilbert space  spanned by the base states $S$. Finally, $S_a\subseteq S$ and $S_r=S\setminus S_a$  are accepting and rejecting sets of states, respectively. As a result, $G$ is actually equivalent to an MO-1QFA.

\end{proof}


\subsubsection{MM-1QFA}
We review the definition of MM-1QFA. Formally,  given an input alphabet $\Sigma$ and an end-maker
$\$\notin\Sigma$, an
 MM-QFA with $n$ states over the {\it working
 alphabet}
 $\Gamma=\Sigma\cup\{\$\}$ is a six-tuple
 ${\cal M}=(Q, |\psi_0\rangle,
\{U(\sigma)\}_{\sigma\in \Sigma}, Q_a,Q_r,Q_g)$, where
\begin{itemize}
\item  $Q$, $|\psi_0\rangle$, and $U(\sigma)$ ($\sigma\in \Gamma$) are
defined as in the case of MO-1QFA,  $Q_a,Q_r,Q_g$ are disjoint to each other and represent the accepting, rejecting, and going states, respectively.

\item The measurement is  described by the projectors
$P(a)$, $P(r)$ and $P(g)$, with the results in $\{a,r,g \}$ of
which `$a$', `$r$' and `$g$' denote ``accepting'', ``rejecting'' and
``going on'', respectively.
\end{itemize}

Any input word $w$ to MM-1QFA is in the form: $w\in\Sigma^{*}\$$,
with symbol $\$$ denoting the end of a word. Given an input word
$x_1\dots x_n\$$ where $x_1\dots x_n\in \Sigma^n$, MM-1QFA ${\cal
M}$ performs the following computation:
\begin{itemize}
\item [1.] Starting from $|\psi_0\rangle$, $U(x_1)$ is applied, and then we
get a new state $|\phi_1\rangle= U(x_1)|\psi_0\rangle$. In succession, a
measurement of ${\cal O}$ is performed on $|\phi_1\rangle$, and
then the measurement result $i$ ($i\in \{a,g,r\}$) is yielded as
well as  a new state
$|\phi_1^{i}\rangle=\frac{P(i)|\phi_1\rangle}{\sqrt{p_1^i}}$
 is obtained, with corresponding probability
 $p_1^i=||P(i)|\phi_1\rangle||^2$.

 \item [2.]   In the above step, if $|\phi_1^{g}\rangle$ is obtained, then  starting from $|\phi_1^{g}\rangle$, $U(x_2)$ is
 applied and a measurement $\{P(a),P(r),P(g)\}$ is performed. The
  evolution rule is  the same as the above step.

 \item [3.] The process continues as far as the measurement result
 `$g$' is yielded. As soon as the measurement result is `$a$'(`$r$'), the
 computation halts and the input word is accepted (rejected).
\end{itemize}

Thus, the probability $L_{\cal M}(x_1\dots x_n)$ of ${\cal M}$ accepting $w$ is defined as:
\begin{align} \label{QFAP1}
&L_{\cal M}(x_1\dots
x_n)\\
=&\sum^{n+1}_{k=1}||P(a)U(x_k) \prod^{k-1}_{i=1}\big(P(g)U(x_i)\big) |\psi_0\rangle ||^2,
\end{align}
or equivalently,
\begin{align}\label{QFAP2}
&L_{\cal M}(x_1\dots
x_n)\\
=&\sum^{n}_{k=0}||P(a)U(x_{k+1}) \prod^{k}_{i=1}\big(P(g)U(x_i)\big) |\psi_0\rangle ||^2,
\end{align}
where, for simplicity, we can denote $\$$ by $x_{n+1}$  if no confusion results.

\section{Learning MO-1QFA}

First we recall a definition concerning model learning  with an oracle in polynomial time.

\begin{definition}\cite{Tze92}
Let \textbf{R} be a class to be learned and $O_{{\bf R}}$ be an oracle for \textbf{R}. Then \textbf{R} is said
to be polynomially learnable using the oracle $O_{{\bf R}}$ if there is a learning algorithm $L$ and
a two-variable polynomial $p$ such that for every target $r \in \textbf{R}$ of size $n$ to be learned, $L$ runs in time
$p(n, m)$ at any point and outputs a hypothesis that is equivalent to $r$, where $m$ is the maximum
length of data returned by $O_{{\bf R}}$ so far in the run.

\end{definition}

In order to learn a model with polynomial time via an oracle, we hope this oracle is as weaker as possible. For learning MO-1QFA, suppose an oracle can only answer the amplitudes of accepting states for each input string, then can we learn MO-1QFA successfully with polynomial time via such an oracle?  We name such an oracle as $\boldsymbol{AA}$ oracle. For clarifying this point, we try to use $\boldsymbol{AA}$ oracle to learning DFA. In this case,  $\boldsymbol{AA}$ oracle can answer if it is either an accepting state or a rejecting state for each input string. Equivalently, $\boldsymbol{AA}$ oracle is exactly  membership query for learning DFA as the target model. Therefore, learning DFA via $\boldsymbol{AA}$ oracle is not polynomial by virtue of the following Angluin's result \cite{Ang88}.

\begin{theorem}\cite{Ang88}
 DFA are not polynomially learnable using the membership  query oracle only.

\end{theorem}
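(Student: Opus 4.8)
The plan is to prove this by an information-theoretic adversary argument, exhibiting a family of small DFA that a membership-query learner cannot tell apart within polynomially many queries. First I would fix an alphabet, say $\Sigma=\{0,1\}$, and for each string $s\in\Sigma^n$ introduce the \emph{combination-lock} automaton $M_s$: it has states $q_0,q_1,\dots,q_n$ together with a single dead state $d$, with $q_n$ the unique accepting state; reading the $j$-th symbol of $s$ from $q_{j-1}$ advances to $q_j$, while every other transition (as well as every transition out of $d$ and out of $q_n$) leads to $d$. Each $M_s$ is a legitimate DFA with $n+2$ states, hence of size polynomial in $n$, and $M_s$ accepts exactly the single string $s$. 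Thus $\{M_s:s\in\Sigma^n\}$ is a family of $2^n$ pairwise \emph{inequivalent} targets of polynomial size.

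Next I would analyze what a membership oracle can reveal. A membership query is a string $w$, and the oracle for target $M_s$ answers ``accept'' iff $w=s$. Since the only datum returned is a single accept/reject bit, the quantity $m$ in Definition~1 is a constant, so the learner's total running time, and in particular its number of queries, is bounded by $p(n,c)$ for some constant $c$, a polynomial in $n$ alone. Now let the adversary answer ``reject'' to \emph{every} query. For a deterministic learner this answer policy fixes the entire adaptive sequence of queries, and therefore also fixes the hypothesis $H$ the learner finally outputs, independently of which target it is actually facing.

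The counting step then produces the contradiction. Each ``reject'' answer to a query $w$ is inconsistent only with the single target $M_w$ (and with no target at all when $|w|\neq n$), so after $T\le p(n,c)$ queries at most $T$ of the $2^n$ combination locks have been eliminated. For $n$ large enough that $2^n>p(n,c)$, at least two distinct secrets $s_1\neq s_2$ survive, meaning both $M_{s_1}$ and $M_{s_2}$ are consistent with all the ``reject'' answers the learner saw. Facing either of these targets the learner behaves identically and outputs the same $H$; but $M_{s_1}$ and $M_{s_2}$ accept different languages, so $H$ can be equivalent to at most one of them. Hence the learner fails on at least one target of size $n$, contradicting the requirement that a polynomial-time membership-query learner output an equivalent hypothesis for \emph{every} target.

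The main obstacle, and the point needing the most care, is the adversary's consistency together with the treatment of randomized learners: I must guarantee that the ``always reject'' policy is realizable by an honest target, which the counting step ensures since surviving automata provably exist, and that the learner cannot exploit randomness to evade the bound. For a randomized learner I would instead bound the expected number of queries (or fix the random coins and count over the uncertainty in $s$), concluding that the probability of hitting the unqueried secret is exponentially small, so that no correct-with-high-probability polynomial learner can exist either.
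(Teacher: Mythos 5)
The paper itself offers no proof of this theorem---it is stated purely as a citation to Angluin \cite{Ang88}---so the only meaningful comparison is with that cited source, and your combination-lock adversary argument is precisely Angluin's own proof. Your argument is correct: the $2^n$ pairwise-inequivalent lock automata of size $n+2$, the observation that single-bit oracle answers make the query budget a polynomial in $n$ alone under Definition~1, the all-reject adversary that fixes the deterministic learner's transcript and hence its hypothesis, and the counting step (each reject eliminates at most one lock, so for large $n$ two consistent targets survive and the single output hypothesis must fail on one of them) together constitute the standard, complete proof of this lower bound.
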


In fact, in 2007 a stronger result  was proved in \cite{TK07} that 0-reversible automata (that is, a  0-reversible automaton is defined as a RFA with only one accepting state \cite{Ang82})  are not learnable by using membership query  only.  That can be described by the following theorem.



\begin{theorem}\cite{TK07}\label{TK07}
Any  RFA with only one accepting state is not learnable by using membership queries  only.

\end{theorem}

Therefore, we have the following proposition.




\begin{proposition}

DFA and RFA as well as MO-1QFA are not  learnable using $\boldsymbol{AA}$ oracle only.

\end{proposition}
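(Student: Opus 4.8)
The plan is to reduce all three hardness claims to the membership-query lower bound for DFA stated above (Angluin). First I would pin down what $\boldsymbol{AA}$ tells the learner on a \emph{deterministic} target. If the target is a DFA or an RFA, or more generally an MO-1QFA all of whose transition matrices are $0/1$ (hence permutation) matrices, then after reading any input the machine occupies a single basis state with amplitude $1$; consequently the amplitude that $\boldsymbol{AA}$ reports on the accepting states is just the indicator of ``accept/reject''. Thus on such targets the $\boldsymbol{AA}$ oracle conveys exactly the information of a membership query, so the learning problem via $\boldsymbol{AA}$ coincides with the learning problem via MQ.

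Next I would organize the three classes by the inclusions RFA $\subseteq$ DFA and RFA $\subseteq$ MO-1QFA, the latter being the content of the earlier proposition that a $0/1$-transition MO-1QFA is a group automaton. Polynomial learnability is inherited downward — a learner for a class works, in particular, on any subclass of targets — so its negation is inherited upward. Hence it suffices to establish that RFA are \emph{not} polynomially learnable via $\boldsymbol{AA}$ (equivalently, via MQ), and the statements for DFA and for MO-1QFA follow at once. Concretely, were MO-1QFA polynomially learnable via $\boldsymbol{AA}$ by some algorithm $L$, running $L$ on an RFA target would expose $L$ only to the $0/1$ answers of an MQ oracle yet still force it to output an equivalent hypothesis in polynomial time; that would be a polynomial MQ-learner for RFA, and the same reduction handles DFA. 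For the DFA case alone one may instead quote Angluin's theorem above directly, since $\boldsymbol{AA}=\mathrm{MQ}$ there.

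The crux, and where I expect the real work to lie, is the RFA-level lower bound, because the inclusions run the ``wrong'' way for a bare citation of Angluin's theorem: that theorem concerns the superclass DFA, whereas I need hardness for the subclass RFA. I would therefore follow Angluin's adversary scheme but insist that the hard family consist of reversible (group) automata — a superpolynomial family of pairwise-inequivalent RFA of size $n$ together with an answering strategy consistent with all of them on every polynomially large set of queries. The obstacle is that the textbook witnesses for the DFA bound, namely combination-lock automata accepting essentially a single string, use a sink state and recognize non-reversible languages, so they cannot be reused verbatim. The plan is to replace them by permutation automata that hide a secret in the \emph{algebraic} (coset) structure of a group language, chosen so that no short membership query can localize the secret and each query eliminates at most polynomially many candidates; making reversibility coexist with this ``needle in a haystack'' behaviour — rather than letting the learner navigate the state graph by addressing states through a single generator — is the delicate point I would expect to dominate the proof.
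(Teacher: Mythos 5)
Your skeleton is logically sound, and on the DFA case you do exactly what the paper does: observe that $\boldsymbol{AA}$ degenerates to a membership query on a deterministic target and invoke Angluin's theorem. (One small caveat: as defined, $\boldsymbol{AA}$ returns the amplitudes of \emph{all} accepting states, so on a deterministic target it reveals \emph{which} accepting state was reached, not merely an accept/reject bit; this is harmless against Angluin's combination-lock family, which has a single accepting state, but it does break your literal claim that ``the learning problem via $\boldsymbol{AA}$ coincides with the learning problem via MQ'' and hence the simulation step in your reduction of an $\boldsymbol{AA}$-learner to an MQ-learner.) The genuine gap is the one you yourself flag: the RFA-level lower bound, on which both the RFA and the MO-1QFA claims rest, is never proved --- you only outline a program. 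Moreover, there are concrete reasons to doubt that program can be carried out by mimicking Angluin's adversary: Angluin's witnesses accept nonempty \emph{finite} languages, and no nonempty finite language is a group language (if $x$ is accepted by an $n$-state RFA and $m$ is the order of the permutation $U(a)$, then $xa^{m}$ is also accepted, so the language is infinite); worse, lock-style adversaries need superpolynomially many pairwise almost-disjoint candidate languages, which clashes with the coset structure of group languages (for instance, candidates sharing the same permutation action and differing only in their accepting sets are pairwise disjoint only if those sets are disjoint, allowing at most $n$ candidates). So as a proof your proposal is incomplete at its decisive step, and it is not even evident that the RFA statement can be established by this route.

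For the comparison you should know that the paper does not fill this gap either: the proposition carries no proof, and its entire justification is the preceding sentence ``Since RFA are special DFA and MO-1QFA, we have the following proposition'' --- that is, it infers hardness of the \emph{subclass} RFA from Angluin's hardness of the \emph{superclass} DFA, which is precisely the downward inheritance you correctly reject (non-learnability passes upward from a subclass to a superclass, not downward). Consequently only the DFA third of the proposition is actually established, in the paper and in your attempt alike; your proposal is the more careful of the two in that it isolates exactly which step is missing, but it does not supply that step.
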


\begin{proof}

Due to the above Theorem \ref{TK07}, we know that any RFA  is not learnable by using membership query  only. 
Since RFA are special cases of DFA and MO-1QFA, we obtain that neither DFA nor MO-1QFA is  learnable by using membership query  only. 

For learning DFA and RFA,  $\boldsymbol{AA}$ oracle  is exactly equal to membership  query oracle, so   we  conclude that DFA and MO-1QFA are not  learnable using $\boldsymbol{AA}$ oracle only.

\end{proof}

So, we consider a stronger oracle, named as    $\boldsymbol{AD}$   oracle that can answer all amplitudes (instead of the amplitudes of accepting states only)  of the superposition state for each input string. For example, for quantum state $|\psi\rangle=\sum_{i=1}^{n}\alpha_i|q_i\rangle$ where $\sum_{i=1}^{n}|\alpha_i|^2=1$,  $\boldsymbol{AA}$ oracle can only answer the amplitudes of accepting states in $\{|q_1\rangle, |q_2\rangle, \ldots, |q_n\rangle\}$, but $\boldsymbol{AD}$   oracle can answer the amplitudes for all states in $\{|q_1\rangle, |q_2\rangle, \ldots, |q_n\rangle\}$.  Using   $\boldsymbol{AD}$   oracle, we can prove that MO-1QFA and MM-1QFA are polynomially learnable. Therefore, for learning DFA or RFA, $\boldsymbol{AD}$   oracle can answer a concrete state for each input string, where the concrete state is the output state of the target automaton to be learned.

First we can easily prove that RFA are linearly learnable via  $\boldsymbol{AD}$   oracle, and this is the following proposition.

 \begin{proposition} \label{LRFA}
Let RFA  $G=(S, s_0, \Sigma, \delta, S_a)$ be the target to be learned. Then $G$ is linearly  learnable via using $\boldsymbol{AD}$ oracle with query complexity at most $|S||\Sigma|$.
\end{proposition}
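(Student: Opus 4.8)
The plan is to treat the task as a breadth-first exploration of the state-transition graph of $G$, exploiting the fact that for a DFA — and hence for an RFA — the $\boldsymbol{AD}$ oracle on input $w$ returns the concrete state $\delta(s_0,w)$ reached after reading $w$, encoded as the corresponding basis vector $|q_i\rangle$, together with its accept/reject classification coming from the built-in observable $\{P(a),P(r)\}$. The feature that makes the algorithm work is that the oracle returns a \emph{canonical} label (the index $i$) for each state, so the learner can detect when two distinct access strings lead to the same state; this duplicate-detection is exactly what will keep the query count from blowing up.

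Concretely, I would first query the empty string $\epsilon$ to obtain $s_0$ and record $\epsilon$ as its access string. Maintaining a set $V$ of discovered states, each tagged with one access string $w_q$, I would then repeatedly take an unexpanded $q\in V$ and, for every $\sigma\in\Sigma$, query $w_q\sigma$; the answer is the target state $\delta(q,\sigma)$, which is inserted into $V$ with access string $w_q\sigma$ if it is new, and in all cases the transition $\delta(q,\sigma)$ is recorded in the hypothesized table. Because each answer also carries the accept/reject tag of the target state, the accepting set $S_a$ is filled in as a by-product of the same queries. Since $|S|$ is finite and every state is expanded at most once, the procedure terminates after touching each pair in $S\times\Sigma$ exactly once, and the output hypothesis $H$ is assembled from the recorded states, initial state, transition table, and accepting set.

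For correctness, exploration from $s_0$ reaches precisely the reachable states of $G$, which (after trimming unreachable states, so that $|S|$ counts reachable states) is all of $S$; on those states $H$ reproduces $\delta$ and $S_a$ faithfully, hence $H$ and $G$ realize the same transition behaviour and accept the same language. Here I would invoke reversibility: in an RFA every state has, for each $\sigma$, a unique $\sigma$-predecessor, so every state occurs as the target of some transition and its accept/reject label is therefore genuinely revealed during the $|S||\Sigma|$ expansion queries, confirming that no additional queries beyond the exploration are needed to determine $S_a$.

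The complexity is then immediate: the expansion phase issues exactly one query per pair in $S\times\Sigma$, so at most $|S||\Sigma|$ queries are used (the single seeding query on $\epsilon$ does not affect the linear bound), and each runs in time polynomial in the lengths involved, giving linear learnability via the $\boldsymbol{AD}$ oracle. The one point I would spell out with care — and the only genuinely delicate step — is the bookkeeping that recognizes repeated states from the oracle's labels so that the frontier stays of size $|S|$ rather than exponential; this hinges on the oracle returning the actual output state and not merely an accept/reject bit, which by Angluin's theorem above would not permit polynomial learning at all.
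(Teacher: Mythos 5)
Your proposal is correct and follows essentially the same route as the paper's (much terser) proof: query the empty string to obtain $s_0$, then explore the transition graph as a pruned decision tree, using the oracle's concrete state answers to detect duplicates and mark accepting states, for a query count of $O(|S||\Sigma|)$. The only superfluous element is your appeal to reversibility when determining $S_a$ --- every state in $V$ was itself returned by some oracle query, so its accept/reject label is known the moment it is discovered, with no need for the unique-predecessor property.
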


 \begin{proof}
First, $\boldsymbol{AD}$ oracle can answer the initial state $s_0$ via inputting empty string. Then by taking $s_0$  as a vertex, we use pruning algorithm of decision tree to obtain all states in $G$ while accepting states are marked as well. It is easy to know that the query complexity is $O(|S||\Sigma|)$.


\end{proof}



 Our main concern is whether $\boldsymbol{AD}$   oracle is  too strong, that is to say, whether $\boldsymbol{AD}$   oracle  possesses too much information for our learning tasks.
 To clarify this point partially,  we employ a consistency problem that, in a way, demonstrates  $\boldsymbol{AD}$   oracle is really not too strong for our  model learning if the time complexity is polynomial.  So, we first recall an outcome from \cite{Tze92}.

 \begin{theorem}\cite{Tze92}
For any alphabet $\Sigma$ and finite set $S=\{q_1,q_2,\ldots,q_n\}$,   the following   problem is NP-complete: Given a set $D$  with $D\subseteq \Sigma^*\times S$, determine whether there is a DFA $G=(S_1, s_0, \Sigma, \delta, S_a)$  such that  for any $(x,q)\in D$, $\delta(s_0,x)=q$, where $p\in S_1$ if and only if  $ (x,p)\in D $ for some $x\in \Sigma^*$.

\end{theorem}

\begin{remark}
In above Theorem, each element in $D$ consists of a string in $\Sigma^*$ and a state in $S$, so $D$ can be identified with the information  carried by $\boldsymbol{AD}$   oracle in order to learn a DFA. That is to say, even if an $\boldsymbol{AD}$   oracle holds so much information like $D$ contained in a DFA in this way,  it is still not easy (NP-complete) to learn a consistent DFA.  To a certain extent,  $\boldsymbol{AD}$   oracle  is not too strong to learn a DFA.  Since constructing RFA is not easier than constructing DFA and RFA are special MO-1QFA,  we use  $\boldsymbol{AD}$  oracle for learning  MO-1QFA and MM-1QFA.  However, we still do not know what is the weakest oracle to learn MO-1QFA and MM-1QFA with polynomial-time query complexity.

\end{remark}











Let ${\cal M}=(Q, |\psi_0\rangle,
\{U(\sigma)\}_{\sigma\in \Sigma}, Q_a,Q_r) $ be the target MO-1QFA to be learned, where, as the case of learning PFA \cite{Tze92},  the learner is supposed to have the information of $Q, Q_a, Q_r$, but the other parameters are to be learned by mean of querying the oracle for achieving an equivalent MO-1QFA (more concretely, for each $\sigma\in\Sigma$, unitary matrix $V(\sigma)$ corresponding to $U(\sigma)$ needs to be determined, but it is  possible that $V(\sigma)\not=U(\sigma)$). For any $x\in\Sigma^*$, $\boldsymbol{AD}(x)$ can answer an amplitude distribution that is exactly equivalent to a state of superposition corresponding to the input string $x$, more exactly, $\boldsymbol{AD}(x)$ can answer the same state as $U(\sigma_k)U(\sigma_{k-1})\cdots U(\sigma_1)|\psi_0\rangle$ where $x=\sigma_1\sigma_2\cdots\sigma_k$. From now on,  we denote $U(x)=U(\sigma_k)U(\sigma_{k-1})\cdots U(\sigma_1)$ for $x=\sigma_1\sigma_2\cdots\sigma_k$.

We outline the basic idea and method for designing the learning algorithm of MO-1QFA  ${\cal M}$. First, the initial state can be learned from $\boldsymbol{AD}$ oracle by querying empty string $\varepsilon$. Then by using $\boldsymbol{AD}$ oracle we continue to search for a base of the Hilbert space spanned by $\{v^*=U(x)|\psi_0\rangle: x\in\Sigma^*\}$. This procedure will be terminated since the dimension of the space is at most $|Q|$. In fact, we can prove this can be finished in polynomial time. Finally, by virtue of the learned base and solving groups of linear equations we can conclude  $V(\sigma)$ for each $\sigma\in\Sigma$. We prove these results in detail following the algorithm, and now present Algorithm 1 for learning MO-1QFA as follows.

\begin{algorithm}
\caption{Algorithm for learning MO-1QFA  ${\cal M}=(Q, |\psi_0\rangle,
\{U(\sigma)\}_{\sigma\in \Sigma}, Q_a,Q_r) $}\label{LMO}
\begin{algorithmic}[1]
    \State $|\psi_0^*\rangle\leftarrow \boldsymbol{AD}(\varepsilon);$
    \State Set $\mathcal{B}$ to be the empty set$;$
    \State Set $Nod\leftarrow\{node(\varepsilon)\};$
    \While {$Nod$ is not empty}
    \State $\mathbf{begin}$ Take an element $node(x)$ from $Nod$$;$
           \State $v^*(x)\leftarrow \boldsymbol{AD}(x);$
               \If {$v^*(x) \notin span(\mathcal{B})$}
                \State{$\mathbf{begin}$ Add $node(x\sigma)$  to $Nod$ for all  $\sigma\in \Sigma$$;$}
                    \State{\qquad $\mathcal{B}\leftarrow \mathcal{B}\cup\{v^*(x)\}\ \mathbf{end};$}
                \EndIf
    \EndWhile
    \State $\mathbf{end};$
    \State Let $V(\sigma)=[x_{ij}(\sigma)]$ for any $\sigma\in \Sigma$,$\ 1\leq i,j\leq n;$
    \State Define a linear system:
    \State $\qquad$ for any $ v^*(x) \in \mathcal{B}$ and any $\sigma\in\Sigma$,$\ V(\sigma)v^*(x)=v^*(x\sigma)=\boldsymbol{AD}(x\sigma)$$,$
    \State $\qquad$ for $ 1\leq i_1,i_2\leq n$,  if $i_1\neq i_2$, then $\ \sum_{j=1}^nx_{i_1j}(\sigma) \overline{x_{i_2j}(\sigma)  }=0$; otherwise, it is 1$,$
    \State Find a suitable solution for $x_{ij}(\sigma)$'s,
    \If {there is a solution}
    \State return ${\cal M}^*=(Q, |\psi_0^*\rangle,
\{V(\sigma)\}_{\sigma\in \Sigma}, Q_a,Q_r)$
    \Else
    \State return (not exist)$;$
    \EndIf
\end{algorithmic}
\end{algorithm}


Next we prove the correctness of Algorithm 1 and then analyze its complexity. First we prove that Step 1 to Step 12 in Algorithm 1 can produce a set of vectors $\mathcal{B}$ consisting of a base of space spanned by $\{v^*(x)| x\in\Sigma^*\}$, where $v^*(x)=\boldsymbol{AD}(x)$ is actually  the vector replied by oracle $\boldsymbol{AD}$ for input string $x$, that is $v^*(x)=\boldsymbol{AD}(x)=U(\sigma_k)U(\sigma_{k-1})\ldots U(\sigma_1)|\psi_0\rangle$, for $x=\sigma_1\sigma_2\ldots\sigma_k$.

\begin{proposition} \label{Algorithm1P}

In Algorithm 1 for learning MO-1QFA,  the final  set of vectors $\mathcal{B}$ consists of a base of Hilbert space $span\{v^*(x)| x\in\Sigma^*\}$ that is spanned by $\{v^*(x)| x\in\Sigma^*\}$.

\end{proposition}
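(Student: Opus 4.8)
The plan is to prove two things: that the set $\mathcal{B}$ returned by the while-loop is linearly independent, and that its span equals $span\{v^*(x)\mid x\in\Sigma^*\}$. The first is immediate from the construction. A vector $v^*(x)$ is inserted into $\mathcal{B}$ only when the test $v^*(x)\notin span(\mathcal{B})$ at Step~7 succeeds, so every newly added vector is independent of those already present; moreover, since each $v^*(x)$ lives in the $n$-dimensional space ${\cal H}_Q$, the loop can enlarge $\mathcal{B}$ at most $n$ times and therefore terminates, with $\mathcal{B}$ linearly independent throughout the run.

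The containment $span(\mathcal{B})\subseteq span\{v^*(x)\mid x\in\Sigma^*\}$ is trivial, because every element of $\mathcal{B}$ is of the form $v^*(x)$. The real content is the reverse inclusion, and here I would exploit the recurrence $v^*(x\sigma)=U(\sigma)v^*(x)$, which follows at once from $\boldsymbol{AD}(x)=U(x)|\psi_0\rangle$ and $U(x\sigma)=U(\sigma)U(x)$. Write $W=span(\mathcal{B})$ for the final subspace. The key lemma I would establish is that $W$ is invariant under every transition operator, i.e. $U(\sigma)W\subseteq W$ for all $\sigma\in\Sigma$. To verify this it suffices to check $U(\sigma)b\in W$ for each generator $b\in\mathcal{B}$. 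Such a $b$ equals $v^*(x)$ for a string $x$ that was actually added to $\mathcal{B}$; at that moment the loop placed every child $node(x\sigma)$ into $Nod$ at Step~8, so each $x\sigma$ is eventually dequeued and queried. When $x\sigma$ is processed, either $v^*(x\sigma)$ is appended to $\mathcal{B}$, or it already lies in the current span; in both cases $v^*(x\sigma)\in W$, because $\mathcal{B}$ only grows during the run. Since $U(\sigma)b=U(\sigma)v^*(x)=v^*(x\sigma)$, we obtain $U(\sigma)b\in W$, and invariance follows by linearity.

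With invariance in hand, I would finish by a short induction on $|x|$ showing $v^*(x)\in W$ for every $x\in\Sigma^*$. The base case is $v^*(\varepsilon)=|\psi_0\rangle$, which is inserted into $\mathcal{B}$ on the first pass, since it is a nonzero unit vector and hence not in $span(\emptyset)=\{0\}$; thus $v^*(\varepsilon)\in W$. For the inductive step, if $v^*(x)\in W$ then $v^*(x\sigma)=U(\sigma)v^*(x)\in U(\sigma)W\subseteq W$. Therefore every $v^*(x)$ lies in $W$, giving $span\{v^*(x)\mid x\in\Sigma^*\}\subseteq W$ and hence equality. Combined with linear independence, this shows that $\mathcal{B}$ is a base of $span\{v^*(x)\mid x\in\Sigma^*\}$.

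The main obstacle is the invariance lemma, that is, justifying that the pruning at Step~7 is safe: one must argue that declining to expand the children of a node $x$ whose vector already lies in the span never causes the algorithm to overlook a genuinely new direction. The recurrence $v^*(x\sigma)=U(\sigma)v^*(x)$, together with the bookkeeping that every child of an \emph{accepted} node is eventually examined, is precisely what closes this gap; everything else in the argument is routine.
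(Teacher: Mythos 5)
Your proof is correct, and it rests on exactly the same two facts as the paper's own argument: the recurrence $v^*(x\sigma)=U(\sigma)v^*(x)$, and the observation that every child of a node whose vector was added to $\mathcal{B}$ is eventually dequeued and queried, so its vector lies in the final span. The difference is one of packaging rather than substance. The paper proves spanning by a single induction on the length of the suffix $y$ in a decomposition $x=x_jy$ with $x_j$ an accepted string: the induction hypothesis gives $v^*(x_jz)=\sum_k c_k v^*(x_k)$, and pushing $U(\sigma)$ through the sum yields $v^*(x_jz\sigma)=\sum_k c_k v^*(x_k\sigma)$, where each summand lies in the span because children of accepted nodes are examined. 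You isolate precisely this push-through step as an explicit invariance lemma, $U(\sigma)\,span(\mathcal{B})\subseteq span(\mathcal{B})$ for every $\sigma\in\Sigma$, and then run a clean induction on $|x|$. The two inductions are interchangeable, but your organization buys some rigor the paper leaves implicit: an explicit proof of linear independence of $\mathcal{B}$ (needed for it to be a \emph{base}, and never addressed in the paper's proof, which only establishes spanning), an explicit termination argument via the dimension bound, and a clearly stated justification that the pruning at Step~7 is safe — including the point that a vector found to lie in the \emph{current} span still lies in the \emph{final} span because $\mathcal{B}$ only grows. It also avoids the paper's somewhat loose framing ``for any $x$ there are $x_j$ and $y$ with $x=x_jy$,'' which as literally stated is vacuous (take $x_j=\varepsilon$); what is really meant, and what your version makes transparent, is that the accepted strings form a prefix-closed tree whose one-step extensions are all examined.
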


\begin{proof}
From the algorithm procedure we can assume that ${\cal B}=\{v^*(x_1), v^*(x_2),\ldots,v^*(x_m)\}$ for some $m$, where it is clear that some $x_i$ equals to $\varepsilon$, and for any $x\in\Sigma^*$, there are  $x_j$ and $y\in\Sigma^*$ such that $x=x_jy$. The rest is to show that $v^*(x)$ can be linearly represented by the vectors in ${\cal B}$ for any $x\in\Sigma^*$. Let $x=x_jy$ for some $x_j$ and $y\in\Sigma^*$. By induction on the length $|y|$ of $y$. If $|y|=0$, i.e., $y=\varepsilon$, then it is clear for $x=x_j$. If $|y|=1$, then due to the procedure of algorithm, $v^*(x_jy)$ is linearly dependent on ${\cal B}$.  Suppose that it holds for $|y|=k\geq 0$. Then we need to verify it holds for $|y|=k+1$. Denote $y=z\sigma$ with $|z|=k$. Then with induction hypothesis we have $v^*(x_jz)= \sum_{k}c_kv^*(x_k)$. Therefore we have
\begin{align}
v^*(x)&=v^*(x_jz\sigma)\nonumber\\
&=U(\sigma)v^*(x_jz)\nonumber\\
&=U(\sigma)\sum_{k}c_kv^*(x_k)\nonumber\\
&=\sum_{k}c_kv^*(x_k\sigma).
 \end{align}
Since $v^*(x_k\sigma)$ is linearly dependent on ${\cal B}$ for $k=1,2,\ldots,m$, the proof is completed.

\end{proof}

The purpose of Algorithm 1 is to learn the target MO-1QFA ${\cal M}=(Q, |\psi_0\rangle,
\{U(\sigma)\}_{\sigma\in \Sigma}, Q_a,Q_r)$, so we need to verify ${\cal M}^*=(Q, |\psi_0^*\rangle,
\{V(\sigma)\}_{\sigma\in \Sigma}, Q_a,Q_r)$ obtained is equivalent to ${\cal M}$. For this  it suffices to check  $V(x)|\psi_0^*\rangle=U(x)|\psi_0\rangle$ for any $x\in\Sigma^*$,
 where $V(x)=V(\sigma_s)V(\sigma_{s-1})\ldots V(\sigma_1)$ and $U(x)=U(\sigma_s)U(\sigma_{s-1})\ldots U(\sigma_1)$ for $x=\sigma_1\sigma_2\ldots\sigma_s$.

\begin{theorem}\label{Algorithm1T}

In Algorithm 1 for learning MO-1QFA, for any $x\in\Sigma^*$,
 \begin{equation}
 V(x)|\psi_0^*\rangle=U(x)|\psi_0\rangle.
 \end{equation}

\end{theorem}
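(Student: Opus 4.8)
The plan is to prove the identity by induction on the length $s=|x|$, after first upgrading the defining property of the learned operators: I would show that each $V(\sigma)$ coincides with the true $U(\sigma)$ not merely on the stored basis $\mathcal{B}$, but on the entire subspace $W=\mathrm{span}\{v^*(x): x\in\Sigma^*\}$. The two facts that make the recursion run are recorded first. Step~1 of Algorithm~1 sets $|\psi_0^*\rangle=\boldsymbol{AD}(\varepsilon)=|\psi_0\rangle$, so in particular $v^*(\varepsilon)=|\psi_0\rangle$; and by the meaning of the oracle, $v^*(x)=\boldsymbol{AD}(x)=U(x)|\psi_0\rangle$ for every $x$, whence $v^*(x\sigma)=U(\sigma)U(x)|\psi_0\rangle=U(\sigma)v^*(x)$.

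The key step is the promotion of agreement from $\mathcal{B}$ to $W$. The first family of equations in the linear system of Algorithm~1 imposes $V(\sigma)v^*(x)=v^*(x\sigma)$ for every $v^*(x)\in\mathcal{B}$ and every $\sigma\in\Sigma$; combined with $v^*(x\sigma)=U(\sigma)v^*(x)$ this gives $V(\sigma)v^*(x)=U(\sigma)v^*(x)$ on all of $\mathcal{B}$. Since $\mathcal{B}$ is a basis of $W$ by Proposition~\ref{Algorithm1P} and both operators are linear, they agree on every $w=\sum_k c_k v^*(x_k)\in W$, because $V(\sigma)w=\sum_k c_k v^*(x_k\sigma)=U(\sigma)w$. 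I would also note that $W$ is invariant under each $U(\sigma)$, since $U(\sigma)v^*(x)=v^*(x\sigma)\in W$; this invariance is what keeps the whole computation inside $W$.

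Now I would run the induction on $s=|x|$ with the statement $V(x)|\psi_0^*\rangle=v^*(x)=U(x)|\psi_0\rangle$. The base case $x=\varepsilon$ is immediate from $|\psi_0^*\rangle=|\psi_0\rangle=v^*(\varepsilon)$. For the inductive step, write $x=y\sigma$ with $|y|=s$, so that $V(x)=V(\sigma)V(y)$; by the induction hypothesis $V(y)|\psi_0^*\rangle=v^*(y)\in W$, whence
\begin{align}
V(y\sigma)|\psi_0^*\rangle &= V(\sigma)V(y)|\psi_0^*\rangle = V(\sigma)v^*(y)\nonumber\\
&= U(\sigma)v^*(y) = v^*(y\sigma) = U(y\sigma)|\psi_0\rangle,
\end{align}
where the third equality uses the agreement of $V(\sigma)$ and $U(\sigma)$ on $W$ together with $v^*(y)\in W$. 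This closes the induction and yields the theorem.

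The only delicate point, and the step I expect to require the most care, is precisely this promotion from $\mathcal{B}$ to $W$: the linear-system constraints pin $V(\sigma)$ down only on the stored basis, so the argument must explicitly invoke linearity to obtain agreement on the full span, and then verify that every intermediate state $V(\sigma_j)\cdots V(\sigma_1)|\psi_0^*\rangle$ remains in $W$. The $U(\sigma)$-invariance of $W$ is exactly what licenses applying the span-agreement at each of the $s$ successive applications rather than only at the last one; I would emphasize that without this invariance the induction could not be pushed through, even though $V(\sigma)$ and $U(\sigma)$ may genuinely differ on $W^{\perp}$.
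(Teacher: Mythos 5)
Your proof is correct and follows essentially the same route as the paper's: both argue by induction on $|x|$, using Proposition \ref{Algorithm1P} to expand vectors in the basis $\mathcal{B}$, the linear-system constraints $V(\sigma)v^*(x_k)=v^*(x_k\sigma)$, the oracle identification $v^*(z)=U(z)|\psi_0\rangle$, and linearity. The only difference is organizational: you extract the paper's inline chain of equalities into a standalone lemma asserting that $V(\sigma)$ and $U(\sigma)$ agree on all of $W=\mathrm{span}\{v^*(x):x\in\Sigma^*\}$, which is a cleaner packaging of a mathematically identical argument.
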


\begin{proof}

For $x=\varepsilon$, $V(\varepsilon)=U(\varepsilon)=I$, and $|\psi_0^*\rangle=\boldsymbol{AD}(\varepsilon)=|\psi_0\rangle$, so it holds.

For any $\sigma\in\Sigma$ and for any $ v^*(x) \in \mathcal{B}$, according to  Algorithm 1, we have $\ V(\sigma)v^*(x)=v^*(x\sigma)=\boldsymbol{AD}(x\sigma)$. In particular, taking $x=\varepsilon$, then we have
$\ V(\sigma)|\psi_0^*\rangle=v^*(\sigma)=\boldsymbol{AD}(\sigma)=U(x)|\psi_0\rangle$.

Suppose it holds for $|x|= k$. The rest is to prove that it holds for $|x|= k+1$. Denote $y=x\sigma$ where $|x|= k$ and $\sigma\in\Sigma$. Due to Proposition \ref{Algorithm1P}, $v^*(x)$ can be linearly represented by ${\cal B}=\{v^*(x_1), v^*(x_2),\ldots,v^*(x_m)\}$, i.e., $v^*(x)=\sum_{k}c_kv^*(x_k)$ for some $c_k\in \mathbb{C}$. With the induction hypothesis,  $V(x)|\psi_0^*\rangle=U(x)|\psi_0\rangle$ holds.
Then by means of Algorithm 1 we have
\begin{align}
V(y)|\psi_0^*\rangle &=V(x\sigma)|\psi_0^*\rangle \nonumber\\
&=V(\sigma)V(x)|\psi_0^*\rangle \nonumber\\
&=V(\sigma)U(x)|\psi_0\rangle \nonumber\\
&=V(\sigma)v^*(x)\nonumber\\
&=V(\sigma)\sum_{k}c_kv^*(x_k)\nonumber\\
&=\sum_{k}c_kV(\sigma)v^*(x_k)\nonumber\\
&=\sum_{k}c_kv^*(x_k\sigma).\nonumber\\
 \end{align}
On the other hand, since $v^*(z)=
\boldsymbol{AD}(z)=U(z)|\psi_0\rangle$ for any $z\in \Sigma^*$, we have

\begin{align}
\sum_{k}c_kv^*(x_k\sigma)
&=\sum_{k}c_kU(x_k\sigma)|\psi_0\rangle\nonumber\\
&=\sum_{k}c_kU(\sigma)U(x_k)|\psi_0\rangle\nonumber\\
&=\sum_{k}c_kU(\sigma)v^*(x_k)\nonumber\\
&=U(\sigma)\sum_{k}c_kv^*(x_k)\nonumber\\
&=U(\sigma)v^*(x)\nonumber\\
&=U(x\sigma)|\psi_0\rangle\nonumber \\
&=U(y)|\psi_0\rangle.
 \end{align}
So, the proof is completed.

\end{proof}

From Theorem \ref{Algorithm1T} it follows that Algorithm 1 returns an equivalent MO-1QFA to the target MO-1QFA ${\cal M}=(Q, |\psi_0\rangle,
\{U(\sigma)\}_{\sigma\in \Sigma}, Q_a,Q_r)$ to be learned. Next we analyze the computational complexity of Algorithm 1.

\begin{proposition} \label{Algorithm1:Complexity}

Let the target MO-1QFA to be learned have $n$'s bases states. The the computational complexity of Algorithm 1 is  $O(n^5|\Sigma|)$.

\end{proposition}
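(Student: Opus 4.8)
The plan is to split the running time of Algorithm~1 into three phases — the single initialization query $|\psi_0^*\rangle\leftarrow\boldsymbol{AD}(\varepsilon)$, the \emph{while} loop that builds the basis $\mathcal{B}$, and the final phase that sets up and solves the linear systems for the matrices $V(\sigma)$ — bound each phase separately, and add. The combinatorial heart of the bound is the observation that controls the size of the search: every vector placed in $\mathcal{B}$ is linearly independent of the earlier ones and lives in the $n$-dimensional space $\mathcal{H}_Q$, so $|\mathcal{B}|\le n$. Consequently the enqueueing step (which adds $|\Sigma|$ children) is executed at most $n$ times, and since every node put into $Nod$ is eventually dequeued and processed, the total number of processed nodes — hence the total number of $\boldsymbol{AD}$ queries and of loop iterations — is at most $1+n|\Sigma|=O(n|\Sigma|)$. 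Each query returns an $n$-dimensional amplitude vector, so the data length $m$ appearing in Definition~1 is $O(n)$, and it remains only to account for the arithmetic.

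For the \emph{while} loop I would charge each iteration its dominant cost, the linear-independence test $v^*(x)\notin span(\mathcal{B})$. With $\le n$ vectors of $\mathbb{C}^n$ in $\mathcal{B}$, this test is the consistency check of an $n\times(\le n)$ linear system, which Gaussian elimination settles in $O(n^3)$ time (an incrementally maintained echelon form would give $O(n^2)$, but $O(n^3)$ suffices). Multiplying by the $O(n|\Sigma|)$ iterations gives $O(n^4|\Sigma|)$ for this phase. For the final phase I would, for each of the $|\Sigma|$ symbols, read off the constraints $V(\sigma)v^*(x_k)=v^*(x_k\sigma)=\boldsymbol{AD}(x_k\sigma)$ over the $\le n$ basis strings (these values are already on hand, being exactly the children of the basis nodes), which determine $V(\sigma)$ on the reachable subspace $W=span(\mathcal{B})$; one then extends $V(\sigma)$ to a unitary on $\mathbb{C}^n$ by Gram--Schmidt completion of a basis of $W^{\perp}$ and any unitary (e.g. identity) action there. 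Each such solve-and-complete is a constant number of $O(n^3)$ linear-algebra operations, giving $O(n^3|\Sigma|)$, or $O(n^5|\Sigma|)$ under the cruder accounting that treats the combined system in the $n^2$ entries of $V(\sigma)$ row by row. Summing the three phases reproduces the claimed $O(n^5|\Sigma|)$ (and a sharper count in fact yields $O(n^4|\Sigma|)$).

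The step I expect to be the genuine obstacle is justifying that the solution-finding step is truly polynomial, rather than an instance of the two NP-completeness propositions established just above for RFA and MO-1QFA. The resolution I would give is that, unlike in those hardness reductions, a feasible point here is \emph{known} to exist and has special structure: by Proposition~\ref{Algorithm1P} the prescribed assignment $v^*(x_k)\mapsto v^*(x_k\sigma)$ is the restriction to $W$ of the genuinely unitary operator $U(\sigma)$, since $U(\sigma)v^*(x)=v^*(x\sigma)\in W$ and $U(\sigma)$ preserves inner products. Thus $V(\sigma)\!\restriction_W$ is forced to be an isometry of the finite-dimensional $W$, which therefore admits a unitary extension to all of $\mathbb{C}^n$, realizable in polynomial time by any orthonormal completion of $W^{\perp}$. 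I would make this isometry-and-extension claim precise and check that the quadratic orthonormality conditions are then automatically satisfied; this is exactly the point where the complexity argument leans on the correctness already secured by Theorem~\ref{Algorithm1T}, and where the care is needed to keep the ``suitable solution'' search inside polynomial time.
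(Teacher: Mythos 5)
Your proposal is correct, and it diverges from the paper's proof exactly where the paper is weakest. The paper uses the same two-phase skeleton: for the basis-construction phase it charges $O(n^3)$ per linear-independence test (citing Faddeev--Faddeev) and asserts at most $n$ such tests, obtaining $O(n^4)$; your count of $O(n|\Sigma|)$ dequeued nodes (hence $O(n^4|\Sigma|)$ for this phase) is the more careful one, since every vector added to $\mathcal{B}$ spawns $|\Sigma|$ children that must each be dequeued and tested, not just the $n$ successful additions. The genuine divergence is in the solution-finding phase: the paper declares that the constraints of Steps 15--16 form ``a problem of linear programming'' and cites Karmarkar and Boyd--Vandenberghe to get $O(n^5|\Sigma|)$ per-phase cost. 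That is shaky, because the orthonormality conditions of Step 16, $\sum_{j} x_{i_1 j}(\sigma)\overline{x_{i_2 j}(\sigma)}=\delta_{i_1 i_2}$, are quadratic in the unknowns, so the LP machinery does not literally apply --- and an arbitrary quadratically-constrained feasibility problem could well be intractable, which is precisely the concern you raise via the adjacent NP-completeness propositions. Your isometry-and-extension argument is the right repair: the linear constraints alone pin down $V(\sigma)$ on $W=span(\mathcal{B})$ as the restriction of the genuine unitary $U(\sigma)$, hence an isometry mapping $W$ bijectively onto $W$ (since $U(\sigma)v^*(x)=v^*(x\sigma)\in W$ and $\dim W<\infty$), and any orthonormal completion of $W^{\perp}$ together with, say, the identity action there yields a unitary extension in $O(n^3)$ arithmetic per symbol; the quadratic conditions then hold automatically because the constructed matrix is unitary by construction. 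This simultaneously certifies that a ``suitable solution'' exists, keeps the search polynomial without appealing to optimization solvers, and gives the sharper overall bound $O(n^4|\Sigma|)$, of which the paper's stated $O(n^5|\Sigma|)$ is a weaker consequence.
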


\begin{proof}

We consider it from two parts.

(I) The first part of Algorithm 1 to get ${\cal B}$: The complexity to determine the linear independence of some $n$-dimensional vectors is $O(n^3)$ \cite{FF63}, and there are at most $n$ time to check this, so the first part of Algorithm 1 to get ${\cal B}$ needs time $O(n^4)$.

(II) The second part of finding the feasible solutions for  $V(\sigma)$ for each $\sigma\in\Sigma$: For any $\sigma\in\Sigma$, Step 15 defines $|{\cal B}|$'s matrix equations and these equations are clearly equivalent to a group of linear equations, but are subject to the restriction conditions in Step 16. So, this part is actually a problem of linear programming and we can refer to \cite {BV04,Ka84} to get the time complexity is $O(n^5|\Sigma|)$.

Therefore, by combining (I) and (II) we have the complexity of Algorithm 1 is $O(n^5|\Sigma|)$.

\end{proof}

To illustrate Algorithm 1 for learning MO-1QFA, we give an example as follows.

\begin{example} Suppose that  $\mathcal{M}=(Q,|\psi_0\rangle,\{U(\sigma)\}_{\sigma\in\Sigma},Q_{a},Q_r)$ is an ${\rm MO}$-${\rm 1QFA} $ to be learned by Algorithm 1, where $Q=\{q_{0},q_{1}\}$, $Q_a=\{q_{1}\}$, $Q_r=\{q_{0}\}$, $ \Sigma=\{a\}$, $U(a)=\begin{bmatrix}
\frac{1}{\sqrt{2}} &  \frac{1}{\sqrt{2}}\\
\frac{1}{\sqrt{2}} &  -\frac{1}{\sqrt{2}}\\
\end{bmatrix}$, states $q_0$ and $q_1$ correspond to the two quantum basis states
$|q_0\rangle=\begin{bmatrix}
1\\
0 \\
\end{bmatrix}$ and $|q_1\rangle=\begin{bmatrix}
0\\
1 \\
\end{bmatrix}$, and $|\psi_0\rangle=|q_0\rangle$.
Denote $\mathcal{M}^*=(Q,|\psi_0\rangle^*,\{V(\sigma)\}_{\sigma\in\Sigma},Q_{a},Q_r)$ as the ${\rm MO}$-${\rm 1QFA} $  learned from Algorithm 1, and the procedure for obtaining $\mathcal{M}^*$ from Algorithm 1 is given below.

Step 1 of Algorithm 1 yields
\begin{equation}
\begin{split}
|\psi_0\rangle^*=&\textbf{AD}(\epsilon)\\
=&U(\epsilon)|\psi_0\rangle\\
=&I|\psi_0\rangle\\
=&|q_0\rangle.
\end{split}
\end{equation}

The 1st iteration run of the while loop body in Algorithm 1 is given below, with the computation of each set.

Step 6 of Algorithm 1 yields
\begin{equation}
\begin{split}
v^*(\epsilon)=&\textbf{AD}(\epsilon)\\
=&U(\epsilon)|\psi_0\rangle\\
=&I|\psi_0\rangle\\
=&|q_0\rangle.
\end{split}
\end{equation}

Step 8 of Algorithm 1 yields
\begin{equation}
\begin{split}
Nod=\{node(a)\}.
\end{split}
\end{equation}

Step 9 of Algorithm 1 yields
\begin{equation}
\begin{split}
\mathcal{B}=\{|q_0\rangle\}.
\end{split}
\end{equation}

The 2nd iteration run of the while loop body in Algorithm 1 is given below, with the computation of each set.

Step 6 of Algorithm 1 yields
\begin{equation}
\begin{split}
v^*(a)=&\textbf{AD}(a)\\
=&U(a)|\psi_0\rangle\\
=&U(a)|q_0\rangle\\
=&\frac{|q_0\rangle+|q_1\rangle}{\sqrt{2}}.
\end{split}
\end{equation}

Step 8 of Algorithm 1 yields
\begin{equation}
\begin{split}
Nod=\{node(aa)\}.
\end{split}
\end{equation}

Step 9 of Algorithm 1 yields
\begin{equation}
\begin{split}
\mathcal{B}=\left\{|q_0\rangle,\frac{|q_0\rangle+|q_1\rangle}{\sqrt{2}}\right\}.
\end{split}
\end{equation}

The 3rd iteration run of the while loop body in Algorithm 1 is given below, with the computation of each set.

Step 6 of Algorithm 1 yields
\begin{equation}
\begin{split}
v^*(aa)=&\textbf{AD}(aa)\\
=&U(aa)|\psi_0\rangle\\
=&U(a)U(a)|q_0\rangle\\
=&|q_0\rangle.
\end{split}
\end{equation}

Since $v^*(x)$ belongs to $span(\mathcal{B})$, the statements in the branch statement are not executed at this point. The set $Nod$ is the empty set at this point, so Algorithm 1 exits from the while loop body.

Finally, let 
$V(a)=\begin{bmatrix}
x_{11}(a) &  x_{12}(a) \\
x_{21}(a)  &  x_{22}(a) \\
\end{bmatrix}$, and according  to steps 15 and 16 of  Algorithm 1, we get
\begin{equation}\label{exampleeq1}
\begin{split}
V(a)v^*(\epsilon)=&V(a)|q_0\rangle\\
=&v^*(a)\\
=&\textbf{AD}(a)\\
=&\frac{|q_0\rangle+|q_1\rangle}{\sqrt{2}}.
\end{split}
\end{equation}

\begin{equation}\label{exampleeq2}
\begin{split}
V(a)v^*(a)=&V(a)\left(\frac{|q_0\rangle+|q_1\rangle}{\sqrt{2}}\right)\\
=&v^*(aa)\\
=&\textbf{AD}(aa)\\
=&|q_0\rangle.
\end{split}
\end{equation}

From Eq. (\ref{exampleeq1}) and Eq. (\ref{exampleeq2}), the following system of equations is obtained
\begin{equation}\label{exampleeq3}
\begin{cases}
\begin{bmatrix}
x_{11}(a) &  x_{12}(a) \\
x_{21}(a)  &  x_{22}(a) \\
\end{bmatrix}
\begin{bmatrix}
1 \\
0 \\
\end{bmatrix}=\begin{bmatrix}
\frac{1}{\sqrt{2}}\\
\frac{1}{\sqrt{2}} \\
\end{bmatrix},\\
\begin{bmatrix}
x_{11}(a) &  x_{12}(a) \\
x_{21}(a)  &  x_{22}(a) \\
\end{bmatrix}
\begin{bmatrix}
\frac{1}{\sqrt{2}}\\
\frac{1}{\sqrt{2}} \\
\end{bmatrix}=\begin{bmatrix}
1 \\
0\\
\end{bmatrix}.
\end{cases}
\end{equation}

Solving the system of Eq. (\ref{exampleeq3}) gives
\begin{equation}
\begin{split}
V(a)=&\begin{bmatrix}
x_{11}(a) &  x_{12}(a) \\
x_{21}(a)  &  x_{22}(a) \\
\end{bmatrix}\\
=&\begin{bmatrix}
\frac{1}{\sqrt{2}} &  \frac{1}{\sqrt{2}}\\
\frac{1}{\sqrt{2}} &  -\frac{1}{\sqrt{2}}\\
\end{bmatrix}.
\end{split}
\end{equation}

As a result, we can obtain $\mathcal{M}^*=(Q,|\psi_0\rangle^*,\{V(\sigma)\}_{\sigma\in\Sigma},Q_{a},Q_r)$, where $Q=\{q_{0},q_{1}\}$, $|\psi_0\rangle^*=|q_0\rangle$, $Q_a=\{q_{1}\}$, $Q_r=\{q_{0}\}$, $ \Sigma=\{a\}$,  and $V(a)=\begin{bmatrix}
\frac{1}{\sqrt{2}} &  \frac{1}{\sqrt{2}}\\
\frac{1}{\sqrt{2}} &  -\frac{1}{\sqrt{2}}\\
\end{bmatrix}$.

Therefore, the ${\rm MO}$-${\rm 1QFA} $ $\mathcal{M}^*$, which is equivalent to ${\rm MO}$-${\rm 1QFA} $ $\mathcal{M}$, can be obtained from Algorithm 1.

\end{example}

\section{Learning MM-1QFA}

In this section, we study learning MM-1QFA via  $\boldsymbol{AD}$ oracle. Let ${\cal M}=(Q, |\psi_0\rangle, \{U(\sigma)\}_{\sigma\in \Gamma}, Q_a,Q_r,Q_g)$ be the target QFA to be learned, where $\Gamma=\Sigma\cup\{\$\}$, and $\$\notin\Sigma$ is an end-maker.  As usual,  $Q, Q_a, Q_r, Q_g$ are supposed to be known, and the goal is to achieve unitary matrices $V(\sigma)$ for each $\sigma\in \Gamma$   in order to get an equivalent MM-1QFA
${\cal M}^*=(Q, |\psi_0^*\rangle, \{V(\sigma)\}_{\sigma\in \Gamma}, Q_a,Q_r,Q_g)$. $\boldsymbol{AD}$ oracle can answer an amplitude distribution   $\boldsymbol{AD}(x)$    for any $x\in\Gamma^*$.  MM-1QFA performs measuring after reading each input symbol, and then only the non-halting (i.e. going on) states continues to implement computing for next step, and  
the amplitude distribution for the superposition state after performing each unitary matrix needs to be learned from oracle.

 Therefore, for any $x=\sigma_1\sigma_2\ldots\sigma_k\in\Gamma^*$,  since MM-1QFA  ${\cal M}$ outputs the following state (un-normalized form) as the current state:
\begin{equation}
U(\sigma_k)P_nU(\sigma_{k-1})P_n\ldots U(\sigma_1)P_n|\psi_0\rangle,
\end{equation}
 we require  $\boldsymbol{AD}$ oracle can answer $\boldsymbol{AD}(x)=U(\sigma_k)P_nU(\sigma_{k-1})P_n\ldots U(\sigma_1)P_n|\psi_0\rangle$. In particular, $\boldsymbol{AD}(\varepsilon)=|\psi_0\rangle$.

Before presenting the algorithm of learning MM-1QFA,  we describe the main ideas and procedure.

First the initial state can be learned from $\boldsymbol{AD}$ oracle via querying empty string $\varepsilon$.

Then by using $\boldsymbol{AD}$ oracle we are going to search for a base $\mathcal{B}$ of the Hilbert space spanned by $\{v^*(x): x\in\Sigma^*\}$ where for any $x=\sigma_1\sigma_2\ldots\sigma_k\in\Sigma^*$,
\begin{equation}
v^*(x)=\boldsymbol{AD}(x)=U(\sigma_k)P_nU(\sigma_{k-1})P_n\ldots U(\sigma_1)P_n|\psi_0\rangle.
\end{equation}
 This procedure will be terminated due to the finite dimension of the space (at most $|Q|$), and  this can be completed with polynomial time.

 Finally, by combining  the base $\mathcal{B}$   and with  groups of linear equations we can obtain $V(\sigma)$ for each $\sigma\in\Sigma$. These results can be verified in detail after Algorithm 2, and we now present Algorithm 2 for learning MM-1QFA in the following.

\begin{algorithm}
\caption{Algorithm for learning MM-1QFA ${\cal M}=(Q, |\psi_0\rangle, \{U(\sigma)\}_{\sigma\in \Gamma}, Q_a,Q_r,Q_g)$}
\begin{algorithmic}[1]
    \State $|\psi_0^*\rangle\leftarrow \boldsymbol{AD}(\varepsilon);$ $v^*(\$)\leftarrow \boldsymbol{AD}(\$);$
     \State Set $\mathcal{B}$  to be the empty sets$;$
    \State Set $Nod\leftarrow\{node(\varepsilon)\};$
            \While {$Nod$s not empty}
    \State $\mathbf{begin}$ Take an element $node(x)$ from $Nod$;
           \State $v^*(x)\leftarrow \boldsymbol{AD}(x);$ 
                          \If {$v^*(x) \notin span(\mathcal{B})$}
                                          \State{$\mathbf{begin}$ Add $node(x\sigma)$  to $Nod$ for all  $\sigma\in \Sigma$$;$}
                    \State{\qquad $\mathcal{B}\leftarrow \mathcal{B}\cup\{v^*(x)\}    \ \mathbf{end};$}
                \EndIf
    \EndWhile
    \State $\mathbf{end};$
    \State Let $V(\sigma)=[x_{ij}(\sigma)]$ for any $\sigma\in \Sigma\cup \{\$\}$,$\ 1\leq i,j\leq n;$
    \State Define linear systems:
    \State $\qquad$ for any $ v^*(x) \in \mathcal{B}$ and  any $\sigma\in\Sigma\cup \{\$\}$,$\ V(\sigma)P_nv^*(x)=v^*(x\sigma)=\boldsymbol{AD}(x\sigma)$;
    \State $\qquad$ for $ 1\leq i_1,i_2\leq n$,  if $i_1\neq i_2$, then $\ \sum_{j=1}^nx_{i_1j}(\sigma) \overline{x_{i_2j}(\sigma)  }=0$; otherwise, it is 1$,$
    \State Find a suitable solution for $x_{ij}(\sigma)$'s, and denote $\mathbb{V}=\{V(\sigma):\sigma\in\Sigma\cup\{\$\}\}$$;$
    \If {there is a solution}
    \State return $\mathcal{M}^*=(Q, |\psi_0^*\rangle, \{V(\sigma)\}_{\sigma\in \Sigma   \cup \{\$\}    }, Q_a,Q_r,Q_g)$
    \Else
    \State return (not exist)$;$
    \EndIf
\end{algorithmic}
\end{algorithm}

Next we first demonstrate that the algorithm can find out a base $\cal{B}$ for Hilbert space $span\{v^*(x)| x\in\Sigma^*\}$.

\begin{proposition} \label{Algorithm2P}

In Algorithm 2 for learning MM-1QFA,  the final  set of vectors $\mathcal{B}$ consists of a base of Hilbert space $span\{v^*(x)| x\in\Sigma^*\}$, where $v^*(x)$ is actually the vector replied by oracle $\boldsymbol{AD}$ for input string $x$, that is $v^*(x)=\boldsymbol{AD}(x)=U(\sigma_k)P_nU(\sigma_{k-1})P_n\ldots U(\sigma_1)P_n|\psi_0\rangle$, for $x=\sigma_1\sigma_2\ldots\sigma_k\in\Sigma^*$.

\end{proposition}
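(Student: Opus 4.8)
The plan is to mirror the argument of Proposition \ref{Algorithm1P}, since Algorithm 2 is structurally identical to Algorithm 1 apart from the transition rule obeyed by the queried vectors. The single point that must be checked afresh is that the recurrence governing $v^*$ remains \emph{linear} in the previous vector, despite the insertion of the projector $P_n$. Indeed, for $x=\sigma_1\cdots\sigma_k\in\Sigma^*$ and any $\sigma\in\Sigma$, the definition of $v^*$ gives
\begin{equation}
v^*(x\sigma)=U(\sigma)P_nU(\sigma_k)P_n\cdots U(\sigma_1)P_n|\psi_0\rangle=U(\sigma)P_n\,v^*(x),
\end{equation}
so the map $v\mapsto U(\sigma)P_n v$ carrying $v^*(x)$ to $v^*(x\sigma)$ is a composition of the projector $P_n$ with the unitary $U(\sigma)$, hence linear (though no longer norm-preserving). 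Linearity is the only property the MO-1QFA argument used, so the same induction transfers.

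Concretely, let $\mathcal{B}=\{v^*(x_1),\ldots,v^*(x_m)\}$ be the set returned by the algorithm, where one may take $x_1=\varepsilon$ because $v^*(\varepsilon)=|\psi_0\rangle$ is a unit vector and is therefore adjoined to the initially empty $\mathcal{B}$. By construction every $x\in\Sigma^*$ can be written $x=x_jy$ for some $x_j$ indexing an element of $\mathcal{B}$ and some $y\in\Sigma^*$. I would then show $v^*(x)\in span(\mathcal{B})$ by induction on $|y|$. The base case $|y|=0$ is immediate since $v^*(x_j)\in\mathcal{B}$; the case $|y|=1$ is handled directly by the algorithm, because when $v^*(x_j)$ was inserted into $\mathcal{B}$ the nodes $node(x_j\sigma)$ were all pushed onto $Nod$, so each $v^*(x_j\sigma)$ was subsequently either adjoined to $\mathcal{B}$ or found to lie in $span(\mathcal{B})$, and $\mathcal{B}$ only grows thereafter.

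For the inductive step, write $y=z\sigma$ with $|z|=k$ and invoke the hypothesis to obtain $v^*(x_jz)=\sum_k c_k v^*(x_k)$. Applying the linear map $v\mapsto U(\sigma)P_n v$ and using the recurrence above yields
\begin{align}
v^*(x)=v^*(x_jz\sigma)&=U(\sigma)P_n\,v^*(x_jz)\nonumber\\
&=\sum_k c_k\,U(\sigma)P_n\,v^*(x_k)=\sum_k c_k\,v^*(x_k\sigma).
\end{align}
Since each $x_k$ indexes an element of $\mathcal{B}$, the successor vectors $v^*(x_k\sigma)$ again lie in $span(\mathcal{B})$ by the node-expansion rule, whence $v^*(x)\in span(\mathcal{B})$, closing the induction. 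Termination of the while-loop, and hence finiteness and independence of $\mathcal{B}$, follows because only linearly independent vectors are added and they inhabit a space of dimension at most $|Q|$.

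The only genuine subtlety, and the step where I would be most careful, is verifying that prepending $P_n$ does not spoil the linearity of the transition — it does not, since $P_n$ is itself linear — so that the linear combination can be pushed through the map exactly as in the MO-1QFA case. Once this is secured, the remaining bookkeeping about $span(\mathcal{B})$ and the successor nodes is routine, and the proposition follows in complete analogy with Proposition \ref{Algorithm1P}.
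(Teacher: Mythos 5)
Your proposal is correct and follows essentially the same route as the paper's own proof: the same decomposition $x=x_jy$, the same induction on $|y|$ with the $|y|=1$ case supplied by the node-expansion rule, and the same key step of pushing the linear combination through the map $v\mapsto U(\sigma)P_n v$. Your explicit remark that $P_n$ preserves linearity (though not norm) and your justification that $\mathcal{B}$ only grows are slightly more careful elaborations of points the paper leaves implicit, but they do not constitute a different argument.
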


\begin{proof}

Suppose that ${\cal B}=\{v^*(x_1), v^*(x_2),\ldots,v^*(x_m)\}$, where it is clear that some $x_i$ equals to $\varepsilon$. So, for any $x\in\Sigma^*$, there are  $x_j$ and $y\in\Sigma^*$ such that $x=x_jy$. It suffices to show that $v^*(x)$ can be linearly represented by the vectors in ${\cal B}$. By induction on the length $|y|$ of $y$. If $|y|=0$, i.e., $y=\varepsilon$, then it is obvious for $x=x_j$. In addition, for $|y|=1$, $v^*(x_jy)$ is linearly dependent on ${\cal B}$ in terms of the algorithm's operation.  Suppose that it holds for $|y|=k\geq 0$. Then we need to verify it holds for $|y|=k+1$. Denote $y=z\sigma$ with $|z|=k$. Then by induction hypothesis  $v^*(x_jz)= \sum_{k}c_kv^*(x_k)$ for some $c_k\in\mathbb{C}$ with $k=1,2,\ldots,m$. Therefore we have
\begin{align}
v^*(x)&=v^*(x_jz\sigma)\nonumber\\
&=U(\sigma)P_nv^*(x_jz)\nonumber\\
&=U(\sigma)P_n\sum_{k}c_kv^*(x_k)\nonumber\\
&=\sum_{k}c_kU(\sigma)P_nv^*(x_k)\nonumber\\
&=\sum_{k}c_kv^*(x_k\sigma).
 \end{align}

Since $v^*(x_k\sigma)$ is linearly dependent on ${\cal B}$ for $k=1,2,\ldots,m$, $v^*(x)$ can be linearly represented by the vectors in ${\cal B}$ and the proof is completed.

\end{proof}

Then we need to verify that the MM-1QFA ${\cal M}^*$ obtained in Algorithm 2 is  equivalent to the target MM-1QFA ${\cal M}$. This can be achieved by checking $V(\$)P_n|\psi_0^*\rangle=U(\$)P_n|\psi_0\rangle$ and  for any  $x=\sigma_1\sigma_2\ldots\sigma_k\in\Sigma^*$,
\begin{equation}\label{MM-1QFAequ}
V(\sigma_k)P_nV(\sigma_{k-1})P_n\ldots V(\sigma_1)P_n|\psi_0^*\rangle=U(\sigma_k)P_nU(\sigma_{k-1})P_n\ldots U(\sigma_1)P_n|\psi_0\rangle.
\end{equation}
 So we are going to prove the following theorem.

\begin{theorem}

In Algorithm 2 for learning MM-1QFA, we have
\begin{equation} \label{MM-1QFAequ2}
V(\$)P_n|\psi_0^*\rangle=U(\$)P_n|\psi_0\rangle;
\end{equation}
and for any $x=\sigma_1\sigma_2\ldots\sigma_k\in\Sigma^*$, Eq. (\ref{MM-1QFAequ}) holds, where $|x|\geq 1$.

\end{theorem}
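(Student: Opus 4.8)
The plan is to reproduce the induction of Theorem~\ref{Algorithm1T}, the single new ingredient being the going-on projector $P_n$ that now sits between consecutive transition matrices. For $x=\sigma_1\sigma_2\ldots\sigma_k\in\Sigma^*$ I introduce the shorthand $w^*(x)=V(\sigma_k)P_nV(\sigma_{k-1})P_n\ldots V(\sigma_1)P_n|\psi_0^*\rangle$ for the surviving (un-normalized) state produced by the learned system; Eq.~(\ref{MM-1QFAequ}) is then precisely the assertion $w^*(x)=v^*(x)$, where $v^*(x)=\boldsymbol{AD}(x)=U(\sigma_k)P_nU(\sigma_{k-1})P_n\ldots U(\sigma_1)P_n|\psi_0\rangle$ as fixed in Proposition~\ref{Algorithm2P}.

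First I would dispatch the end-marker relation~(\ref{MM-1QFAequ2}). Since $\varepsilon$ is the first node explored, the nonzero unit vector $v^*(\varepsilon)=|\psi_0^*\rangle$ is added to $\mathcal{B}$, so the defining relation of the learned system (Step~15 with $x=\varepsilon$, $\sigma=\$$) gives $V(\$)P_n|\psi_0^*\rangle=V(\$)P_nv^*(\varepsilon)=v^*(\$)=\boldsymbol{AD}(\$)=U(\$)P_n|\psi_0\rangle$, which is exactly~(\ref{MM-1QFAequ2}).

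For~(\ref{MM-1QFAequ}) I would induct on $|x|$. The base case $|x|=1$ is immediate, $w^*(\sigma_1)=V(\sigma_1)P_n|\psi_0^*\rangle=V(\sigma_1)P_nv^*(\varepsilon)=v^*(\sigma_1)$ by Step~15. For the inductive step write $y=x\sigma$ with $|x|=k\geq 1$ and $\sigma\in\Sigma$; peeling off the last symbol and invoking the hypothesis $w^*(x)=v^*(x)$ yields $w^*(y)=V(\sigma)P_nw^*(x)=V(\sigma)P_nv^*(x)$. By Proposition~\ref{Algorithm2P} the vector $v^*(x)$ lies in $\mathrm{span}(\mathcal{B})$, say $v^*(x)=\sum_k c_kv^*(x_k)$ with every $v^*(x_k)\in\mathcal{B}$; distributing $V(\sigma)P_n$ over this sum and applying Step~15 to each basis vector gives $w^*(y)=\sum_k c_kv^*(x_k\sigma)$.

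Finally I would close the loop through the $U$-side. Because $v^*$ is generated by the oracle according to the recursion in Proposition~\ref{Algorithm2P}, the identity $U(\sigma)P_nv^*(z)=v^*(z\sigma)$ holds for every $z\in\Sigma^*$; applying $U(\sigma)P_n$ to the same expansion $v^*(x)=\sum_k c_kv^*(x_k)$ produces $v^*(y)=U(\sigma)P_nv^*(x)=\sum_k c_kv^*(x_k\sigma)$, the identical sum, whence $w^*(y)=v^*(y)$. The conceptual heart is that the two operators $V(\sigma)P_n$ and $U(\sigma)P_n$ agree on every member of $\mathcal{B}$ (both send $v^*(x_k)$ to $v^*(x_k\sigma)$) and are linear, hence agree on all of $\mathrm{span}(\mathcal{B})\ni v^*(x)$. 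I expect the only delicate point—and the main obstacle—to be the bookkeeping of $P_n$: one must check that Step~15 genuinely encodes $V(\sigma)P_nv^*(x)=v^*(x\sigma)$ with the projector acting before the transition matrix, and that the oracle's own recursion $v^*(x\sigma)=U(\sigma)P_nv^*(x)$ places $P_n$ in the identical position, so that one and the same scalars $c_k$ drive both sides. Once this alignment is confirmed, the remainder is the routine linear propagation already carried out in the MO-1QFA case.
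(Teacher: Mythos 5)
Your proof is correct and follows essentially the same route as the paper's: the end-marker case via Step~15 applied to $v^*(\varepsilon)\in\mathcal{B}$, then induction on $|x|$ using the basis expansion from Proposition~\ref{Algorithm2P}, the Step~15 relations $V(\sigma)P_nv^*(x_k)=v^*(x_k\sigma)$, and the oracle recursion $U(\sigma)P_nv^*(z)=v^*(z\sigma)$. Your explicit formulation of the key point---that $V(\sigma)P_n$ and $U(\sigma)P_n$ are linear maps agreeing on every element of $\mathcal{B}$, hence on all of $\mathrm{span}(\mathcal{B})$---is a slightly cleaner packaging of exactly the equation chain the paper writes out, so there is nothing substantive to distinguish the two arguments.
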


\begin{proof}


Note $ v^*(\varepsilon) \in \mathcal{B}$, by means of Step 15 in Algorithm 2 and taking $\sigma=\$ $, we have
\begin{equation}
V(\$)P_nv^*(\varepsilon)=v^*(\$)=\boldsymbol{AD}(\$)=U(\$)P_n|\psi_0\rangle.
\end{equation}
Since $ v^*(\varepsilon)=\boldsymbol{AD}(\varepsilon)=|\psi_0\rangle$, and from Algorithm 2 we know $ \boldsymbol{AD}(\varepsilon)=|\psi_0^*\rangle$,  Eq. (\ref{MM-1QFAequ2}) holds.

Next we prove that Eq. (\ref{MM-1QFAequ}) holds for any $x=\sigma_1\sigma_2\ldots\sigma_k\in\Sigma^*$. We do it by induction method on the length of $|x|$.

If $|x|=1$, say $x=\sigma\in\Sigma$, then with Step 15 in Algorithm 2 and taking $v^*(\varepsilon)$, we have $V(\sigma)P_nv^*(\varepsilon)=v^*(\sigma)=\boldsymbol{AD}(\sigma)=U(\sigma)P_n|\psi_0\rangle$,  so, Eq. (\ref{MM-1QFAequ}) holds for $|x|=1$ due to $v^*(\varepsilon)=|\psi_0\rangle$.

Assume that  Eq. (\ref{MM-1QFAequ}) holds for any  $|x|=k\geq 1$. The rest is to prove that  Eq. (\ref{MM-1QFAequ}) holds for any  $|x|=k+1$. Let $x=y\sigma$ with $y=\sigma_1\sigma_2\ldots\sigma_k$. Suppose $v^*(y)=\sum_{i}c_iv^*(x_i)$ for some $c_k\in \mathbb{C}$. For each $i$,  by means of Step 15 in Algorithm 2, we have
\begin{equation}
V(\sigma)P_nv^*(x_i)=v^*(x_i\sigma)=\boldsymbol{AD}(x_i\sigma)=U(\sigma)P_n\boldsymbol{AD}(x_i),
\end{equation}
and therefore
\begin{equation}
V(\sigma)P_n\sum_{i}c_iv^*(x_i)=U(\sigma)P_n\sum_{i}c_i\boldsymbol{AD}(x_i).
\end{equation}
Since $v^*(x_i)=\boldsymbol{AD}(x_i)$, we further have
\begin{equation}
V(\sigma)P_nv^*(y)=U(\sigma)P_nv^*(y).
\end{equation}
By using $v^*(y)=U(\sigma_k)P_nU(\sigma_{k-1})P_n\ldots U(\sigma_1)P_n|\psi_0\rangle$, and the above induction hypothesis (i.e., Eq. (\ref{MM-1QFAequ}) holds), we have
\begin{equation}
V(\sigma_{k+1})P_nV(\sigma_k)P_nV(\sigma_{k-1})P_n\ldots V(\sigma_1)P_n|\psi_0^*\rangle=U(\sigma_{k+1})P_nU(\sigma_{k-1})P_n\ldots U(\sigma_1)P_n|\psi_0\rangle.
\end{equation}
Consequently, the proof is completed.




\end{proof}

To conclude the section,
 we give the computational complexity of Algorithm 2.

\begin{proposition} \label{Algorithm2:Complexity}

Let the target MM-1QFA to be learned have $n$'s bases states. The the computational complexity of Algorithm 2 is  $O(n^5|\Sigma|)$.

\end{proposition}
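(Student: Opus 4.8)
The plan is to follow the two-part accounting used in the proof of Proposition~\ref{Algorithm1:Complexity}, since Algorithm 2 has essentially the same skeleton as Algorithm 1; the only structural differences are the insertion of the projector $P_n$ between successive unitaries and the enlargement of the alphabet from $\Sigma$ to $\Gamma=\Sigma\cup\{\$\}$. I would first argue that neither modification alters the asymptotic cost, and then isolate the matrix-recovery step as the dominant contribution.

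First I would analyze the base-construction phase (Steps 1--13), which produces $\mathcal{B}$. By Proposition~\ref{Algorithm2P}, $\mathcal{B}$ is a base of $span\{v^*(x)\mid x\in\Sigma^*\}$, a subspace of the $n$-dimensional Hilbert space, so $|\mathcal{B}|\le n$. Each time a vector is adjoined to $\mathcal{B}$, at most $|\Sigma|$ child nodes are enqueued, so the number of nodes ever examined is $O(n|\Sigma|)$; for each node we query $\boldsymbol{AD}$ and test whether the returned $n$-dimensional vector lies in $span(\mathcal{B})$, which costs $O(n^3)$ by the standard method \cite{FF63}. The only extra work relative to Algorithm 1 is the multiplication by $P_n$, but since $P_n$ is a diagonal projector this is $O(n^2)$ per application and is absorbed into the independence test. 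Hence this phase costs $O(n^4)$ (or $O(n^4|\Sigma|)$ if the node count is made explicit), which is dominated by the second phase.

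Next I would bound the matrix-recovery phase (Steps 14--18). For each symbol $\sigma\in\Sigma\cup\{\$\}$, Step~15 imposes $|\mathcal{B}|\le n$ matrix equations of the form $V(\sigma)P_nv^*(x)=\boldsymbol{AD}(x\sigma)$, while Step~16 imposes the orthonormality constraints on the rows of $V(\sigma)$; together these constitute a feasibility problem in the $n^2$ unknowns $x_{ij}(\sigma)$. Treating it as a linear program as in \cite{BV04,Ka84} gives cost $O(n^5)$ for a single symbol, and since $|\Gamma|=|\Sigma|+1$, solving for all symbols costs $O\!\left(n^5(|\Sigma|+1)\right)=O(n^5|\Sigma|)$. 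Combining the two phases then yields the claimed bound $O(n^5|\Sigma|)$.

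I expect the step deserving the most care to be the cost accounting of the feasibility problem in Steps~15--16: the orthonormality constraints $\sum_{j}x_{i_1j}(\sigma)\overline{x_{i_2j}(\sigma)}=\delta_{i_1i_2}$ are quadratic rather than linear, so one must justify — exactly as is implicitly done in the analysis of Proposition~\ref{Algorithm1:Complexity} — that recovering a unitary $V(\sigma)$ consistent with the linear equations still fits within the $O(n^5)$ budget per symbol (for instance, by fixing $V(\sigma)$ on the subspace $span\{P_nv^*(x)\}$ from the linear equations and completing to a unitary by orthonormalization). Everything else is routine bookkeeping parallel to the Algorithm 1 case.
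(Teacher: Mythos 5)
Your proposal is correct and follows essentially the same route as the paper: the paper's own proof of this proposition is a one-line deferral to the analysis of Proposition~\ref{Algorithm1:Complexity}, and your two-part accounting (base construction in $O(n^4)$, then the per-symbol feasibility problem in $O(n^5)$ over $|\Gamma|=|\Sigma|+1$ symbols) is exactly the expansion of that deferred argument, with the $P_n$ multiplications and the end-marker correctly absorbed into the asymptotics. Your closing caveat about the orthonormality constraints being quadratic rather than linear is a legitimate concern, but it applies equally to the paper's own treatment of Step 16 in Algorithm 1, so it is not a gap relative to the paper's argument.
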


\begin{proof}
It is actually similar to the  proof of Proposition \ref{Algorithm1:Complexity}.
\end{proof}

\begin{remark}

Weighted finite automata (WFA) (e.g., see \cite{BM15})  are finite automata whose transitions and states are augmented with some weights, elements of a semiring, and a WFA also induces a function over strings. Learning WFA has been significantly studied and the details can be referred to \cite{BM15} and the references therein.  The algorithms for learning WFA are closely  related to  Hankel  matrices. More specifically, for a field $S$ and a finite alphabet  $\Sigma$, then the rank of the Hankel matrix $H_f$ associated to a function $f:\Sigma^*\rightarrow S$  is finite if and only if there exists a WFA ${\cal A }$ representing $f$ with rank($H_f$ ) states and no WFA representing $f$ admits fewer states.  Though  membership queries (MQ) and equivalence queries (EQ)  are used in in the algorithms of learning WFA, the way to induce functions  by WFA  is different from the definitions of accepting probabilities  in QFA, and particularly it is not known whether  the Hankel matrices can be used to study QFA (as we are aware,  there are no results  concerning the Hankel matrices associated with QFA). Therefore, it is still an open problem of  whether  the algorithms of learning WFA can be used to study learning QFA.

\end{remark}

\section{Concluding remarks}

Quantum finite automata (QFA) are simple models of quantum computing with finite memory, but QFA have significant advantages over classical finite automata concerning state complexity \cite{AY15,SY14,BK19},  and QFA can be realized physically to a considerable extent \cite{MP20}. As a new topic in quantum learning theory and quantum machine learning, learning QFA via queries has been studied in this paper.  As  classical model  learning \cite{Vaa17}, we can term it as {\it quantum model learning}.

The main results we have obtained are that we have proposed two polynomial-query  learning algorithms for measure-once one-way QFA (MO-1QFA) and measure-many one-way QFA (MM-1QFA), respectively. The oracle to be used is an $\boldsymbol{AD}$ oracle that can answer an amplitude distribution, and we have analyzed that a weaker oracle being only able to answer  accepting or rejecting for any inputting string may be not enough for learning QFA with polynomial time.

Here a question is how to compare $\boldsymbol{AD}$ oracle  with $\boldsymbol{MQ}$ oracle and $\boldsymbol{EQ}$ oracle? In general, $\boldsymbol{MQ}$ oracle and $\boldsymbol{EQ}$ oracle are together used in  classical models learning with deterministic or nondeterministic transformation of states; $\boldsymbol{AD}$ oracle  can return a superposition state for an input string in QFA, as $\boldsymbol{SD}$ oracle in \cite{Tze92} can return a distribution of state for an input string in PFA, so for learning DFA, $\boldsymbol{AD}$ oracle and $\boldsymbol{SD}$ oracle can return a state for each input string, not only accepting state as  $\boldsymbol{MQ}$ oracle can do.  Of course, the problem of whether both $\boldsymbol{MQ}$ oracle and $\boldsymbol{EQ}$ oracle  together can be used to study QFA learning is still not clear.  Furthermore,   if $\boldsymbol{AD}$ oracle can return the weight of general weighted automata for each input string,  then the problem of whether $\boldsymbol{AD}$ oracle can be used to study general weighted automata learning is worthy of consideration carefully.

However, we still do not know whether there is a  weaker oracle ${\cal Q}$  than $\boldsymbol{AD}$ oracle but by using ${\cal Q}$ one can learn MO-1QFA or MM-1QFA with polynomial time.  Furthermore, what is the weakest oracle to learn MO-1QFA and MM-1QFA with polynomial query complexity?
These are interesting and challenging problems to be solved. Of course,  for learning RFA, similar to learning DFA \cite{Ang87}, we can get an algorithm of  polynomial time by using  membership queries (MQ) together with equivalence queries (EQ).

Another interesting problem is how to realize these query oracles physically, including $\boldsymbol{SD}$ oracle, $\boldsymbol{AD}$ oracle, and even $\boldsymbol{MQ}$ as well as $\boldsymbol{EQ}$.  In quantum query algorithms, for any given Boolean function $f$,  it is supposed that a quantum query operator called an oracle, denoted by $O_f$, can output  the value of $f(x)$ with any input $x$.  As for the construction of quantum circuits for $O_f$, there are two cases:  (1)   If a Boolean function $f$ is in the form of disjunctive normal form (DNF) and  suppose that the truth table of $f$ is known, then an algorithm for constructing quantum circuit to realize $O_f$ was 
proposed in \cite{ACR21}. However, this method relies on the truth table of the function, which means that it is difficult to apply, since the truth table of the function is likely not known in practice.  (2)   If a Boolean function $f$ is a conjunctive normal form (CNF), then a polynomial-time algorithm was designed in \cite{QLX22} for constructing a quantum circuit to realize $O_f$, without any further condition on $f$.

As mentioned above,
besides MO-1QFA and MM-1QFA, there are other one-way QFA,  including
Latvian QFA  \cite{ABG06},  QFA with control language \cite{BMP03},    1QFA with ancilla qubits (1QFA-A) \cite{Pas00},   one-way quantum finite automata together with classical states (1QFAC) \cite{QLMS15}, and other 1QFA such as  Nayak-1QFA (Na-1QFA), General-1QFA (G-1QFA), and fully 1QFA (Ci-1QFA) etc.  So, one of the further problems worthy of consideration is to investigate learning these QFA via queries. 

Finally,  we would like to analyze partial possible methods for considering these problems. In the present paper, we have used   $\boldsymbol{AD}$ oracle as queries and quantum algorithms for determining the equivalence between 1QFA to be learned for learning  both MO-1QFA  and MM-1QFA. So, an algorithm for determining the equivalence between 1QFA to be learned is necessary in our method.  In general, as we studied in \cite{LQ08},  for designing an algorithm to determine the equivalence between 1QFA, we first transfer the 1QFA  to a classical linear mathematical model, and then  obtain the result by using the known  algorithm for determining the  equivalence   between classical linear mathematical models.
As pointed out in \cite{AY15},  the equivalence between 1QFA  can be determined, though the equivalence problems for some of 1QFA  still have not been studied carefully.  Of course, some 1QFA  also involve more parameters to be learned, for example, 1QFAC have classical states to be determined.

\section*{Acknowledgments}

The authors are grateful to the two anonymous referees for invaluable suggestions and comments that greatly helped us improve the quality of this paper.
This work is supported in part by the National Natural Science Foundation of China (Nos. 61876195, 61572532) and the Natural Science Foundation of Guangdong Province of China (No. 2017B030311011).


%

%
\section*{References}

\end{document}